\pgfplotsset{compat=1.16}
\tikzstyle{ang}=[regular polygon, regular polygon sides = 3,draw,inner sep=0pt,minimum size=6mm, yshift = -0.75 mm]
\tikzstyle{dem}=[shape=diamond,draw,inner sep=0pt,minimum size=6mm]
\tikzstyle{ran}=[shape=circle,draw,inner sep=0pt,minimum size=6mm]
\tikzstyle{det}=[shape=rectangle,draw,inner sep=0pt,minimum size=5mm]
\tikzstyle{tran}=[draw,->,>=stealth, rounded corners]
\tikzstyle{pt}=[shape=circle,draw,inner sep=0pt,minimum size=0mm]
\tikzstyle{trannoarrow}=[draw,>=stealth, rounded corners]
\tikzstyle{state}+=[minimum size = 6mm, inner sep=0,outer sep=1]
\colorlet{disabled}{lightgray}
\tikzstyle{state}=[draw,rectangle,inner sep=5pt,rounded corners=2pt]
\tikzstyle{action}=[font=\small,inner sep=0pt,outer sep=3pt]
\tikzstyle{actionnode}=[circle,draw=black,fill=black,minimum size=1mm,inner sep=0,outer sep=0]
\tikzstyle{actionedge}=[draw,-]
\tikzstyle{prob}=[font=\scriptsize,inner sep=0pt,outer sep=1pt]
\tikzstyle{probedge}=[draw,->]
\tikzstyle{directedge}=[draw,->]
\tikzset{chainarrow/.tip={Stealth[length=3pt]}}
\tikzset{>=chainarrow}
\let\llncssubparagraph\subparagraph
\let\subparagraph\paragraph
\let\subparagraph\llncssubparagraph
\titlespacing*{\section}{0pt}{2ex plus 1ex minus 0.5ex}{1ex plus 0.5ex minus 0.5ex}
\titlespacing*{\subsection}{0pt}{1.5ex plus 0.5ex minus 0.5ex}{1ex plus 0.5ex minus 0.5ex}
\titlespacing*{\subsubsection}{0pt}{1ex plus 0.5ex minus 0.25ex}{1ex}
\titlespacing*{\paragraph}{0pt}{0.75ex plus 0.5ex minus 0ex}{1ex}
\newcommand{\qee}{\hfill\ensuremath{\bigtriangleup}}
\newcommand{\algomem}{\ensuremath{\mathsf{AlgMemLess}}}
\newcommand{\algogen}{\ensuremath{\mathsf{AlgDist}}}
\begin{document}
%
%\title{Invariant Synthesis for Distribution Transformers}
\title{MDPs as Distribution Transformers: Affine Invariant Synthesis for Safety Objectives\thanks{This work was supported in part by the ERC CoG 863818 (FoRM-SMArt) and the European Union’s Horizon 2020 research and innovation programme under the Marie Skłodowska-Curie Grant Agreement No.~665385 as well as DST/CEFIPRA/INRIA project EQuaVE and SERB Matrices grant MTR/2018/00074.}}
%\title{Synthesizing strategies via invariants for MDPs as Distribution transformers}
\titlerunning{Invariant Synthesis for Affine Safety Objectives in MDPs}
\author{S.~Akshay\inst{1} \and
	Krishnendu Chatterjee\inst{2} \and
	Tobias Meggendorfer\inst{3,}\inst{2} \and
	\DJ{}or\dj{}e \v{Z}ikeli\'c \inst{2}}
\authorrunning{S.~Akshay, K.~Chatterjee, T.~Meggendorfer, \DJ{}.~\v{Z}ikeli\'c}
% First names are abbreviated in the running head.
% If there are more than two authors, 'et al.' is used.
%
\institute{
	Indian Institute of Technology Bombay, India \and
	Institute of Science and Technology Austria (ISTA), Austria \and
	Technical University of Munich, Germany
}

\maketitle

\iftoggle{arxiv}{}{\vspace{-2em}}

\begin{abstract}
  Markov decision processes can be viewed as transformers of probability distributions. While this view is useful from a practical standpoint to reason about trajectories of distributions, basic reachability and safety problems are known to be computationally intractable (i.e., Skolem-hard) to solve in such models. Further, we show that even for simple examples of MDPs, strategies for safety objectives over distributions can require infinite memory and randomization.

  In light of this, we present a novel overapproximation approach to synthesize strategies in an MDP, such that a safety objective over the distributions is met. More precisely, we develop a new framework for template-based synthesis of certificates as affine distributional and inductive invariants for safety objectives in MDPs. We provide two algorithms within this framework. One can only synthesize memoryless strategies, but has relative completeness guarantees, while the other can synthesize general strategies. The runtime complexity of both algorithms is in PSPACE. We implement these algorithms and show that they can solve several non-trivial examples. % that are out of reach of existing approaches.%\todo{is there any existing approach that can do something even similar for MDPs?}

\keywords{Markov decision processes \and invariant synthesis \and distribution transformers \and Skolem hardness}
\end{abstract}

\section{Introduction}
Markov decision processes (MDPs) are a classical model for probabilistic decision making systems. They extend the basic probabilistic model of Markov chains with non-determinism and are widely used across different domains and contexts. In the verification community, MDPs are often viewed through an automata-theoretic lens, as state transformers, with runs being sequences of states with certain probability for taking each run (see e.g.,~\cite{DBLP:books/daglib/0020348}). With this view, reachability probabilities can be computed using simple fixed point equations and model checking can be done over appropriately defined logics such as PCTL*. However, in several contexts such as modelling biochemical networks, queueing theory or probabilistic dynamical systems, it is more convenient to view MDPs as transformers of probability distributions over the states, and define objectives over these distributions~\cite{DBLP:conf/qest/ChadhaKVAK11,DBLP:conf/qest/KorthikantiVAK10,DBLP:conf/lics/AkshayGV18,DBLP:journals/tse/KwonA11,DBLP:journals/logcom/BeauquierRS06,DBLP:journals/jacm/AgrawalAGT15}. In this framework, we can, for instance, easily reason about properties such as the probability in a set of states always being above a given threshold or comparing the probability in two states at some future time point. More concretely, in a chemical reaction network, we may require that the concentration of a particular complex is never above $10\%$. Such distribution-based properties cannot be expressed in PCTL* \cite{DBLP:journals/logcom/BeauquierRS06}, and thus several orthogonal logics have been defined~\cite{DBLP:journals/logcom/BeauquierRS06,DBLP:conf/qest/KorthikantiVAK10,DBLP:journals/jacm/AgrawalAGT15} that reason about distributions.

Unfortunately, and perhaps surprisingly, when we view them as distribution transformers even the simplest reachability and safety problems with respect to probability distributions over states remain unsolved. The reason for this is a number-theoretical hardness result that lies at the core of these questions. In~\cite{DBLP:journals/ipl/AkshayAOW15}, it is shown that even with just Markov chains, reachability is as hard as the so-called \textsc{Skolem} problem, and safety is as hard as the \textsc{Positivity} problem~\cite{DBLP:conf/soda/OuaknineW14,DBLP:journals/siglog/OuaknineW15}, the decidability of both of which are long-standing open problems in linear recurrence sequences. Moreover, synthesizing strategies that resolve the non-determinism in MDPs to achieve an objective (whether reachability or safety) is further complicated by the issue of how much memory can be allowed for the strategy. As we show in Section~\ref{sec:problemandexamples}, even for very simple examples, strategies for safety can require infinite memory as well as randomization.  

In light of these difficulties, what can one do to tackle these problems \emph{in theory and in practice}? In this paper, we take an over-approximation route to approach these questions, not only to check existence of strategies for safety but also synthesize them. Inspired by the success of invariant synthesis in program verification, our goal is to develop a novel invariant-synthesis based approach towards strategy synthesis in MDPs, viewed as transformers of distributions.  In this paper, we restrict our attention to a class of safety objectives on MDPs, which are already general enough to capture several interesting and natural problems on MDPs. 

\paragraph*{Our contributions} are the following:
\begin{compactenum}
\item We define the notion of {\em inductive distributional invariants} for safety in MDPs. These are formalized as sets of probability distributions over states of the MDP, that (i) contain all possible distributions reachable from the initial distribution, under all  strategies of an MDP, and (ii) are closed under taking the next step. 
\item We show that such invariants provide {\em sound and complete certificates} for proving safety objectives in MDPs. In doing so, we formalize the link between strategies and distributional invariants in MDPs. This by itself does not help us get effective algorithms in light of the hardness results above. Hence we then focus on synthesizing invariants of a particular {\em shape}. %\todo{add inductive?} 
\item We develop two algorithms for automated synthesis of {\em affine} inductive distributional invariants that prove safety in MDPs, and {\em at the same time}, synthesize the associated strategies.
  \begin{itemize}
  \item The first algorithm is restricted to synthesizing memoryless strategies but is {\em relatively complete}, i.e., whenever a memoryless strategy and an affine inductive distributional invariant that witness safety exist, we are guaranteed to find them.
%    it is guaranteed to compute a memoryless strategy and an affine inductive distributional invariant for safety, provided they exist.
%    proves safety if one exists. 
  \item The second algorithm can synthesize general strategies as well as memoryless strategies, but is incomplete in general.
  \end{itemize}
  In both cases, we employ a template-based synthesis approach and reduce synthesis to the existential first-order theory of reals, which gives a PSPACE complexity upper bound. In the first case, this reduction depends on Farkas' lemma. In the second case, we need to use Handelman's theorem, a specialized result for strictly positive polynomials.
\item We implement our approaches and show that for several practical and non-trivial examples, affine invariants suffice. Further, we demonstrate that our prototype tool can synthesize these invariants as well as strategies associated with them. % \textcolor{blue}{Djordje: Many -> several?}
\end{compactenum}
Finally, we discuss the generalization of our approach from affine to polynomial invariants and some variants that our approach can handle. 

%\\label{paragraph}{Structure of this paper.} The rest of this work is organized as follows. Section~\ref{sec:preliminaries} establishes preliminaries. Section~\ref{sec:problemandexamples} formaly defines the safety problem over probability distributions and illustrates it on two examples. Section~\ref{sec:invariantssec} then introduces inductive distributional invariants and shows that they define a sound and complete certificate for proving safety objectives in MDPs. Section~\ref{sec:algo} presents our two algorithms for automated synthesis of affine inductive distributional invariants and strategies for proving safety. In Section~\ref{sec:extensions}, we discuss our results and possible extensions. Section~\ref{sec:experiments} provides an experimental evaluation of our automated algorithms. Finally, we conclude this work in Section~\ref{sec:conclusion}.

\subsection{Related Work}

%\paragraph*{Related work.}
\paragraph*{Distribution-based safety analysis in MDPs.} The problem of checking distribution-based safety objectives for MDPs was defined in~\cite{DBLP:conf/lics/AkshayGV18} but a solution was provided only in the {\em uninitialized} setting, where the initial distribution is not given and also under the assumption that the target set is closed and bounded. In contrast, we tackle both initialized and uninitialized settings, our target sets are general affine sets and we focus on actually synthesizing strategies not just proving existence.

\paragraph*{Template-based program analysis.} Template-based synthesis via the means of linear/polynomial constraint solving is a standard approach in program analysis to synthesizing certificates for proving properties of programs. Many of these methods utilize Farkas' lemma or Handelman's theorem to automate the synthesis of program invariants~\cite{ColonSS03,DBLP:conf/pldi/Chatterjee0GG20}, termination proofs~\cite{ColonS01,PodelskiR04,BradleyMS05,AliasDFG10,ChatterjeeG0Z21}, reachability proofs~\cite{DBLP:conf/pldi/AsadiC0GM21} or cost bounds~\cite{0002AH12,Carbonneaux0S15,ZikelicCBR22}. The works~\cite{SriramCAV,CNZ17,DBLP:conf/cav/ChatterjeeGMZ22,WFGCQS19,takisaka2021ranking,ChatterjeeGNZZ21,CFG16,AgrawalC018,CFNH16:prob-termination} utilize Farkas' lemma or Handelman's theorem to synthesize certificates for these properties in probabilistic programs. While our algorithms build on the ideas from the works on template-based inductive invariant synthesis in programs~\cite{ColonSS03,DBLP:conf/pldi/Chatterjee0GG20}, the key novelty of our algorithms is that they synthesize a fundamentally different kind of invariants, i.e.~{\em distributional invariants} in MDPs. In contrast, the existing works on (probabilistic) program analysis synthesize {\em state} invariants. Furthermore, our algorithms synthesize distributional invariants {\em together} with MDP strategies. While it is common in controller synthesis to synthesize an MDP strategy for a {\em state} invariant, we are not aware of any previous work that uses template-based synthesis methods to compute MDP strategies for a {\em distributional} invariant. %Therefore, this is one of the main algorithmic novelties of our work.

\paragraph*{Other approaches to invariant synthesis in programs.} Alternative approaches to invariant synthesis in programs have also been considered, for instance via abstract interpretation~\cite{CousotC77,CousotCFMMMR05,FeautrierG10,RodriguezCarbonellK07}, counterexample guided invariant synthesis (CEGIS)~\cite{0001LMN14,AlurBDF0JKMMRSSSSTU15,BatzCJKKM23}, recurrence analysis~\cite{FarzanK15,KincaidBBR17,KincaidCBR18} or learning~\cite{0001NMR16,SiDRNS18}. While some of these approaches can be more scalable than constraint solving-based methods, they typically do not provide relative completeness guarantees. An interesting direction of future work would be to explore whether these alternative approaches could be used for synthesizing distributional invariants together with MDP strategies more efficiently.

\paragraph*{Weakest pre-expectation calculus.} Expectation transformers and the weakest pre-expectation calculus generalize Dijkstra's weakest precondition calculus to the setting of probabilistic programs. Expectation transformers were introduced in the seminal work on probabilistic propositional dynamic logic (PPDL)~\cite{Kozen83} and were extended to the setting of probabilistic programs with non-determinism in~\cite{MorganMS96,McIverM05}. Weakest pre-expectation calculus for reasoning about expected runtime of probabilistic programs was presented in~\cite{KaminskiKMO18}. Intuitively, given a function over probabilistic program outputs, the weakest pre-expectation calculus can be used to reason about the supremum or the infimum expected value of the function upon executing the probabilistic program, where the supremum and the infimum are taken over the set of all possible schedulers (i.e.\ strategies) used to resolve non-determinism. When the function is the indicator function of some output set of states, this yields the method for reasoning about the probability of reaching the set of states. Thus, weakest pre-expectation calculus allows reasoning about safety with respect to {\em sets of states}. In contrast, we are interested in reasoning about safety with respect to {\em sets of probability distribution over states}. Moreover, while the expressiveness of this calculus allows reasoning about very complex programs, its automation typically requires user input. In this work, we aim for a fully automated approach to checking distribution-based safety.

\section{Preliminaries}\label{sec:preliminaries}
In this section, we recall basics of probabilistic systems and set up our notation.
We assume familiarity with the central ideas of measure and probability theory, see \cite{billingsley2008probability} for a comprehensive overview.
%
%As usual, $\Naturals$ and $\Reals$ refers to the (positive) natural numbers and real numbers, respectively.
We write $[n] := \{1, \dots, n\}$ to denote the set of all natural numbers from $1$ to $n$.
For any set $S$, we use $\overline{S}$ to denote its complement.
A \emph{probability distribution} on a countable set $X$ is a mapping $\distribution : X \to [0,1]$, such that $\sum_{x\in X} \distribution(x) = 1$.
Its \emph{support} is denoted by $\support(\distribution) = \{x \in X \mid \distribution(x) > 0\}$.
We write $\Distributions(X)$ to denote the set of all probability distributions on $X$.
An event happens \emph{almost surely} (a.s.) if it happens with probability $1$.
We assume that countable sets of states $\States$ are equipped with an arbitrary but fixed numbering. %and refer to the $i$-th state by $\state_i$.
%In particular, we identify distributions over states with $n$-dimensional vectors.

\subsection{Markov Systems}
A \emph{(discrete time) Markov chain (MC)} is a tuple $\MC = (\States, \transitions)$, where
	$\States$ is a finite set of \emph{states} and
	$\transitions : \States \to \Distributions(\States)$ a \emph{transition function}, assigning to each state a probability distribution over successor states.
A \emph{Markov decision process (MDP)} is a tuple $\MDP = (\States, \Actions, \transitions)$, where
	$\States$ is a finite set of \emph{states},
	$\Actions$ is a finite set of \emph{actions}, overloaded to yield for each state $s$ the set of \emph{available actions} $\Actions(s) \subseteq \Actions$, and
	$\transitions: \States \times \Actions \to \Distributions(\States)$ is a \emph{transition function} that for each state $s$ and (available) action $a \in \Actions(s)$ yields a probability distribution over successor states.
For readability, we write $\transitions(s, s')$ and $\transitions(s, a, s')$ instead of $\transitions(s)(s')$ and $\transitions(s, a)(s')$, respectively.
By abuse of notation, we redefine $\States \times \Actions := \{(s, a) \mid s \in \States \land a \in \Actions(s)\}$ to refer to the set of state-action pairs.
See \cref{fig:running} for an example MDP.
This MDP is our running example and we refer to it throughout this work to point out some of the peculiarities.

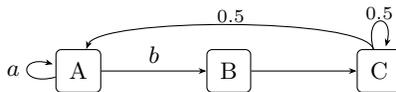
\begin{figure}[t]
	\centering
	\begin{tikzpicture}[auto]
		\node[state] at (0, 0) (A) {A};
		\node[state] at (2, 0) (B) {B};
		\node[state] at (4, 0) (C) {C};

		\draw[directedge]
			(A) edge[loop left] node[action] {$a$} (A)
			(A) edge node[action] {$b$} (B)
			(B) edge (C)
		;
		\draw[probedge]
			(C) edge[loop above] node[prob] {$0.5$} (C)
			(C) edge[out=105,in=70,looseness=0.3,swap] node[prob] {$0.5$} (A)
		;
	\end{tikzpicture}
	\caption{
		Our running example MDP.
		It comprises three states $S = \{A, B, C\}$, depicted by rounded rectangles.
		In state $A$, there are two actions available, namely $a$ and $b$.
		We have $\transitions(A, a, A) = 1$ and $\transitions(A, b, B) = 1$, indicated by arrows.
		States $B$ and $C$ have only one available action each, thus we omit explicitly labelling them.
	} \label{fig:running}
\end{figure}

An \emph{infinite path} in an MC is an infinite sequence $\infinitepath = s_1 s_2 \cdots \in \States^\omega$, such that for every $i \in \Naturals$ we have $\transitions(s_i, s_{i+1}) > 0$.
A \emph{finite path} $\finitepath$ is a finite prefix of an infinite path.
Analogously, infinite paths in MDP are infinite sequences $\infinitepath = s_1 a_1 s_2 a_2 \cdots \in (\States \times \Actions)^\omega$ such that $a_i \in \Actions(s_i)$ and $\transitions(s_i,a_i, s_{i+1}) > 0$ for every $i \in \Naturals$, and finite paths are finite prefixes thereof.
We use $\infinitepath_i$ and $\finitepath_i$ to refer to the $i$-th state in the given (in)finite path, and $\IPaths<M>$ and $\FPaths<M>$ for the set of all (in)finite paths of a system $M$.

\paragraph*{Semantics.}
A Markov chain evolves by repeatedly applying the probabilistic transition function in each step.
For example, if we start in state $s_1$, we obtain the next state $s_2$ by drawing a random state according to the probability distribution $\transitions(s_1)$.
Repeating this ad infinitum produces a random infinite path.
Indeed, together with an initial state $s$, a Markov chain $\MC$ induces a unique probability measure $\ProbabilityMC<\MC, s>$ over the (uncountable) set of infinite paths \cite{DBLP:books/daglib/0020348}.

This reasoning can be lifted to distributions over states, as follows.
Suppose we begin in $\distribution_0 = \{s_1 \mapsto 0.5, s_2 \mapsto 0.5\}$, meaning that initially we are in state $s_1$ or $s_2$ with probability $0.5$ each.
Then, $\distribution_1(s') = \distribution_0(s_1) \cdot \transitions(s_1, s') + \distribution_0(s_2) \cdot \transitions(s_2, s')$, i.e.\ the probability to be in a state $s'$ in the next step is $0.5$ times the probability of moving from $s_1$ and $s_2$ there, respectively.
For an initial distribution, we likewise obtain a probability distribution over infinite paths by setting $\ProbabilityMC<\MC, \distribution_0>[S] := \sum_{s \in \States} \distribution_0(s) \cdot \ProbabilityMC<\MC, s>[S]$ for measurable $S \subseteq \IPaths<\MC>$.

In contrast to Markov chains, MDPs also feature non-determinism, which needs be resolved in order to obtain probabilistic behaviour.
This is achieved by \emph{(path) strategies}, recipes to resolve non-determinism.
Formally, a strategy on an MDP classically is defined as a function $\strategy : \FPaths<\MDP> \to \Distributions(\Actions)$, which given a finite path $\finitepath = s_0 a_0 s_1 a_1 \dots s_n$ yields a probability distribution $\strategy(\finitepath) \in \Distributions(\Actions(s_n))$ on the actions to be taken next.
We write $\Strategies$ to denote the set of all strategies.
Fixing any strategy $\strategy$ induces a Markov chain $\MDP^\strategy = (\FPaths<\MDP>, \transitions^\strategy)$, where for a state $\finitepath = s_0 a_0 \dots s_n \in \FPaths<\MDP>$ the successor distribution is defined as $\transitions^\strategy(\finitepath, \finitepath a_{n+1} s_{n+1}) = \strategy(\finitepath, a_{n+1}) \cdot \transitions(s_n, a_{n+1}, s_{n+1})$.
(Note that the state space of this Markov chain in general is countably infinite.)
Consequently, for each strategy $\strategy$ and initial distribution $\distribution_0$ we also obtain a unique probability measure $\ProbabilityMC<\MDP^\strategy, \distribution_0>$ on the infinite paths of $\MDP$.
(Technically, the MC $\MDP^\strategy$ induces a probability measure over paths in $\MDP^\strategy$, i.e.\ paths where each element is a finite path of $\MDP$, however this can be directly projected to a measure over $\IPaths<\MDP>$.)

A \emph{one-step strategy} (also known as \emph{memoryless} or \emph{positional} strategy) corresponds to a fixed choice in each state, independent of the history, i.e.\ a mapping $\strategy : \States \to \Distributions(\Actions)$.
Fixing such a strategy induces a finite state Markov chain $\MDP^\strategy = (\States, \transitions^\strategy)$, where $\transitions^\strategy(s, s') = \sum_{a \in \Actions(s)} \strategy(s)(a) \cdot \transitions(s, a, s')$.
We write $\StrategiesOS$ for the set of all one-step strategies.

A sequence of one-step strategies $(\strategy_i) \in \StrategiesOS^\omega$ induces a general strategy which in each step $i$ and state $s$ chooses $\strategy_i(s)$.
Observe that aside from the state, such a strategy only depends on the current step, also called \emph{Markov strategy}.
%While there are path strategies that are distinct from Markov strategies, these are sufficient for our purposes, as we argue later.

%
\subsection{MDPs as Distribution Transformers}
Probabilistic systems typically are viewed as \enquote{random generators} for paths, and we consequently investigate the (expected) behaviour of a generated path, i.e.\ path properties.
However, in this work we follow a different view, and treat systems as \emph{transformers of distributions}.
Formally, fix a Markov chain $\MC$.
For a given initial distribution $\distribution_0$, we can define the distribution at step $i$ by $\distribution_i(s) = \ProbabilityMC<\distribution_0>[\{\infinitepath \in \IPaths<\MC> \mid \infinitepath_i = s\}]$.
We write $\distribution_i = \MC(\distribution_0, i)$ for the $i$-th distribution and $\distribution_1 = \MC(\distribution_0)$ for the \enquote{one-step} application of this transformation.
Likewise, we obtain the same notion for an MDP $\MDP$ combined with a strategy $\strategy$, and write $\distribution_i = \MDP^\strategy(\distribution_0, i)$, $\distribution_1 = \MDP^\strategy(\distribution_0)$.
In summary, for a given initial distribution, a Markov chain induces a unique stream of distributions, and an MDP provides one for each strategy.

This naturally invites questions related to this induced stream of distributions.
%In this work, we are interested in the formal verification of MDPs in their role as distribution transformers.
In their path interpretation, queries such as \emph{reachability} or \emph{safety}, i.e.\ asking the probability of reaching or avoiding a set of states, allow for simple, polynomial time solutions \cite{DBLP:books/wi/Puterman94,DBLP:books/daglib/0020348}.
However, the corresponding notions already are surprisingly difficult in the space of distributions.
Thus, we restrict to the \emph{safety problem}, which we introduce in the following.
Intuitively, given a \emph{safe set} of distributions over states $H \subseteq \Distributions(\States)$, we are interested in deciding whether the MDP can be controlled such that the stream of distributions always remains inside $H$.

\section{Problem Statement and Examples}\label{sec:problemandexamples}%Distribution Safety Problem}
Let $\MDP=(\States,\Actions,\transitions)$ be an MDP and $H \subseteq \Distributions(\States)$ be a safe set.
A distribution $\distribution_0$ is called \emph{$H$-safe under $\strategy$} if $\MDP^\strategy(\distribution_0, i) \in H$ for all $i \geq 0$, and \emph{$H$-safe} if there exists a strategy under which $\distribution_0$ is safe.
We mention two variants of the resulting decision problem as defined in~\cite{DBLP:conf/lics/AkshayGV18}:
\begin{compactitem}
	\item Initialized safety:
	Given an initial probability distribution $\distribution_0$ and safe set $H$, decide whether $\distribution_0$ is $H$-safe.
	
	\item Uninitialized safety:
	Given a safe set $H$, decide whether there exists a distribution $\mu$ which is $H$-safe.
\end{compactitem}
Note that we have discussed neither the shape nor the representation of $H$, which naturally plays an important role for decidability and complexity.

One may be tempted to think that the initialized variant is simpler, as more input is given.
However, this problem is known to be \emph{\textsc{Positivity}-hard}\footnote{
Intuitively, the \textsc{Positivity} problem asks for a given rational (or integer or real) matrix $M$, whether $(M^n)_{1,1}>0$ for all $n$ \cite{DBLP:conf/rp/OuaknineW12}. This problem (and its many variants) has been the subject of intense research over the last 10-15 years, see e.g.\ \cite{DBLP:conf/soda/OuaknineW14}.
Yet, quite surprisingly, it still remains open in its full generality.
} already for simple cases and already when $H$ is defined in terms of rational constants!
\begin{theorem}[\cite{DBLP:journals/ipl/AkshayAOW15}] \label{stm:skolem_hard}
	The initialized safety problem for Markov chains and $H$ given as linear inequality constraint ($H = \{\distribution \mid \distribution(s) \leq r, s\in \States, r\in \mathbb{Q}\cap [0,1]\}$), is \textsc{Positivity}-hard.
\end{theorem}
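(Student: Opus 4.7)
The plan is to reduce from the \textsc{Positivity} problem. A \textsc{Positivity} instance can be encoded as a rational matrix $M$ together with (row and column) vectors $\alpha, v$, asking whether $\alpha^T M^n v \geq 0$ for every $n \geq 0$. I would construct, in polynomial time, a Markov chain $\MC$, an initial distribution $\distribution_0$, a state $s$, and a rational $r \in [0,1]$ such that the initialized safety property ``$\distribution_n(s) \leq r$ for all $n$'' holds if and only if the \textsc{Positivity} instance is positive (or, by contraposition, so that a single-step witness $\distribution_n(s) > r$ witnesses $\alpha^T M^n v < 0$).

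The first step is to normalize $M$: by multiplying by a sufficiently small positive rational, we may assume each row of $|M| := M^+ + M^-$ has absolute sum strictly less than $1$, where $M = M^+ - M^-$ is the decomposition of $M$ into its (nonnegative) positive and negative parts. Scaling by a positive constant preserves the sign of each term of the sequence $\alpha^T M^n v$, so it preserves the \textsc{Positivity} instance. Next I would build a Markov chain whose state space consists of two tagged copies $(s, +)$ and $(s, -)$ of the states of $M$, plus an absorbing sink $\star$ that soaks up residual probability. Transitions send $(s, +) \to (s', +)$ with probability $M^+_{s, s'}$ and $(s, +) \to (s', -)$ with probability $M^-_{s, s'}$, with the roles of $+$ and $-$ swapped for $(s, -)$; the leftover mass (now strictly positive by the normalization) routes to $\star$. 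A routine induction then gives $\distribution_n((s, +)) - \distribution_n((s, -)) = (M^n v_0)_s$ for a suitably chosen initial distribution $\distribution_0$ encoding $v$ (i.e. splitting $v$ into $v^+, v^-$ and placing appropriate mass on the $+$/$-$ copies).

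The final step, and the main obstacle, is to express the desired sign condition as a single-state linear inequality of the form $\distribution_n(s) \leq r$, rather than as a \emph{difference} of two state probabilities. Two useful tricks make this possible. First, the conservation identity $\sum_{s,\sigma} \distribution_n((s, \sigma)) + \distribution_n(\star) = 1$ lets one trade sums of probabilities for functions of the sink probability, so that the linear functional $\alpha^T M^n v$ (a fixed rational combination of the $\distribution_n((s, +)) - \distribution_n((s, -))$) can be rewritten as $r - \distribution_n(s^\ast)$ for a single auxiliary state $s^\ast$ and rational $r$, provided one first aggregates the relevant coordinates using a small gadget. Concretely, one adds a final ``observer'' transient state $s^\ast$ whose one-step transition probability from each state is a positive rational multiple of the coefficient of that state in $\alpha$; the probability mass deposited there in a fresh sink chamber then equals a fixed affine function of $\alpha^T M^n v$.

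Choosing the threshold $r$ to be the value of this affine function at the boundary case ``$\alpha^T M^n v = 0$'', the initialized safety query $\distribution_n(s^\ast) \leq r$ (for all $n$) becomes equivalent to the \textsc{Positivity} query. Since the entire construction is polynomial in the bit-size of the input \textsc{Positivity} instance and uses only rational constants, this establishes \textsc{Positivity}-hardness of initialized safety for Markov chains under a single linear inequality constraint on a single state.
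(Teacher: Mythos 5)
The paper does not actually carry out a reduction here: its entire proof is a citation of \cite{DBLP:journals/ipl/AkshayAOW15}, where the inequality version of the Markov reachability problem (does there exist $i$ with $\mu_i(s) > r$?) is shown \textsc{Positivity}-hard, followed by the observation that safety is the complement of reachability. You are instead reconstructing the underlying reduction from scratch, which is a genuinely different (and more informative) route; for comparison, the cited construction arranges for an \emph{exactly} stochastic matrix a single entry of whose $n$-th power is an affine function of the LRS term, with no mass leaking to a sink, which sidesteps most of the machinery you introduce.

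Your rescaling and $\pm$-splitting steps are sound: the induction $\mu_{n+1}((s',+)) - \mu_{n+1}((s',-)) = \sum_s \bigl(\mu_n((s,+))-\mu_n((s,-))\bigr) M_{s,s'}$ does go through, so the signed difference evolves by $M$. The gap is in the final step, exactly where you locate ``the main obstacle.'' As written, giving the observer $s^*$ an incoming probability that is ``a positive rational multiple of the coefficient of that state in $\alpha$'' is impossible when $\alpha$ has negative entries, and the conservation identity you invoke does not rescue it: because your chain leaks mass into $\star$, the sum $\sum_{s,\sigma}\mu_n((s,\sigma)) = 1 - \mu_n(\star) - \cdots$ is \emph{time-varying}, so shifting all coefficients of $\alpha$ by a constant perturbs the functional by a non-constant amount and no fixed threshold $r$ results. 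The repair is to route into $s^*$ from \emph{every} state $q$ of the chain, including $\star$ and $s^*$ itself, with probability $\lambda(C + \tilde\alpha_q)$, where $C = \max_s|\alpha_s|$, $\tilde\alpha_{(s,\pm)} = \pm\alpha_s$, and $\tilde\alpha_q = 0$ for the auxiliary states (taking $\lambda>0$ small enough that this fits inside the leftover mass at each state). Then $\mu_{n+1}(s^*) = \lambda C\sum_q \mu_n(q) + \lambda\sum_s \alpha_s\bigl(\mu_n((s,+))-\mu_n((s,-))\bigr) = \lambda C + \lambda\varepsilon\,\alpha^{\top}M^n v$ (up to an inessential transposition), a genuine affine function with a constant offset, and $r=\lambda C$ works, with $s^*$ kept transient as you propose so that $\mu_n(s^*)$ is the instantaneous rather than accumulated deposit. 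With this correction, and the cosmetic flip of the inequality direction by negating $\alpha$, your reduction goes through.
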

\begin{proof}
	In \cite[Corollary~4]{DBLP:journals/ipl/AkshayAOW15}, the authors show that the inequality version of the Markov reachability problem, i.e.\ deciding whether there exists an $i$ such that $\distribution_i(s) > r$ for a given rational $r$, is \textsc{Positivity}-hard.
	The result follows by observing that safety is the negation of reachability. \qed
\end{proof}
Thus, finding a decision procedure for this problem is unlikely, since it would answer several fundamental questions of number theory, see e.g.\ \cite{DBLP:conf/birthday/KarimovKO022,DBLP:journals/siglog/OuaknineW15,DBLP:conf/soda/OuaknineW14}.
%We refer to for an interesting approach on Markov chains solving.
In contrast, the uninitialized problem is known to be decidable for safe sets $H$ given as closed, convex polytopes (see \cite{DBLP:conf/lics/AkshayGV18} for details and \cite{DBLP:journals/jacm/AgrawalAGT15} for a different approach specific to Markov chains).
In a nutshell, we can restrict to the potential fixpoints of $\MDP$, i.e.\ all distributions $\distribution$ such that $\distribution = \MDP^\strategy(\distribution, i)$ for some strategy $\strategy$.
It turns out that this set of distributions is a polytope and the problem -- glossing over subtleties -- reduces to checking whether the intersection of $H$ with this polytope is non-empty.
However, we note that the solution of \cite{DBLP:conf/lics/AkshayGV18} does not yield the witness strategy.
In the following, we thus primarily focus on the initialized question.
In \cref{sec:extensions}, we then show how our approach, which also synthesizes a witness strategy, is directly applicable to the uninitialized case. %(however, in comparison to \cite{DBLP:conf/lics/AkshayGV18}, with slight restrictions of $H$).

In light of the daunting hardness results for the general initialized problem, we restrict to \emph{affine linear safe sets}, i.e.\ $H$ which are specified by a finite set of affine linear inequalities.
Formally, these sets are of the form $H = \{\distribution \in \Distributions(\States) \mid \bigwedge_{j=1}^N (c_0^j + \sum_{i=1}^n c_i^j \cdot \distribution(s_i)) \geq 0\}$, where $S=\{s_1,\dots,s_n\}$, $c_i^j$ are real-valued constants and $N$ is the number of affine linear inequalities that define $H$.
Our problem formally is given by the following query.
\begin{framed}
	\noindent \textbf{Problem Statement} Given an MDP $\MDP$, initial distribution $\distribution_0$, and affine linear safe set $H$, (i)~decide whether $\distribution_0$ is $H$-safe, and (ii)~if yes, then synthesize a strategy for $\MDP$ which ensures safety.
\end{framed}
Note that the problem strictly subsumes the special case when $H$ is defined in terms of rational constants, and our approach aims to solve both problems. Also, note that \cref{stm:skolem_hard} still applies, i.e.\ this \enquote{simplified} problem is \textsc{Positivity}-hard, too.
We thus aim for a sound and \emph{relatively complete} approach.
Intuitively, this means that we restrict our search to a sub-space of possible solutions and within this space provide a complete answer.
To give an intuition for the required reasoning, we provide an example safety query together with a manual proof.
\begin{example} \label{ex:first_simple_case}
	Consider our running example from \cref{fig:running}.
	Suppose the initial distribution is $\distribution_0 = \{A \mapsto \frac{1}{3}, B \mapsto \frac{1}{3}, C \mapsto \frac{1}{3}\}$ and (affine linear) $H = \{\distribution \mid \distribution(C) \geq \frac{1}{4}\}$.
	This safety query is satisfiable, by, e.g., choosing action $b$, as we show in the following.
	First, observe that the $i+1$-th distribution is $\distribution_{i+1}(A) = \frac{1}{2} \cdot \distribution_i(C)$, $\distribution_{i+1}(B) = \distribution_i(A)$, and $\distribution_{i+1}(C) = \distribution_i(B) + \frac{1}{2} \distribution_i(C)$.
	Thus, we cannot directly prove by induction that $\distribution_i(C) \geq \frac{1}{4}$, we also need some information about $\distribution_i(B)$ or $\distribution_i(A)$ to exclude, e.g., $\distribution_i = \{A \mapsto \frac{3}{4}, C \mapsto \frac{1}{4}\}$, where $\distribution_{i+1}$ would violate the safety constraint.
	We invite the interested reader to try to prove that $\distribution_0$ is indeed $H$-safe under the given strategy to appreciate the subtleties.

	We proceed by proving that $\distribution_i(C) \geq \frac{1}{4}$ and additionally $\distribution_i(A) \leq \distribution_i(C)$ by induction.
	The base case follows immediately, thus suppose that $\distribution_i$ satisfies these constraints.
	For $\distribution_{i+1}(A) \leq \distribution_{i+1}(C)$ observe that $\distribution_{i+1}(A) = \frac{1}{2} \distribution_i(C)$ and $\distribution_{i+1}(C) = \frac{1}{2} \distribution_i(C) + \distribution_i(B)$.
	Since $\distribution_i(B) \geq 0$, the claim follows.
	To prove $\distribution_{i+1}(C) \geq \frac{1}{4}$ %we consider two cases.
%	If $\distribution_i(A) \leq \frac{1}{4}$, then $\distribution_{i}(B) + \distribution_{i}(C) \geq \frac{3}{4}$ and $\distribution_{i+1}(C) = \distribution_i(B) + \frac{1}{2} \distribution_i(C) \geq \frac{3}{4} - \frac{1}{2} \cdot \distribution_i(C) \geq \frac{3}{4} - \frac{1}{2} \geq \frac{1}{4}$.
%	Otherwise, i.e.\ $\distribution_i(A) \geq \frac{1}{4}$, 
	observe that $\distribution_i(A) \leq \frac{1}{2}$ since $\distribution_i(A) \leq \distribution_i(C)$ by induction hypothesis and distributions sum up to $1$.
	Moreover, $\distribution_{i+1}(C) = \distribution_i(B) + \frac{1}{2} \distribution_i(C) = \frac{1}{2} \distribution_i(B) + \frac{1}{2} - \frac{1}{2} \distribution_i(A)$ by again inserting the fact that distributions sum up to $1$.
	Then, $\distribution_{i+1}(C) = \frac{1}{2} - \frac{1}{2} \distribution_i(A) + \frac{1}{2} \distribution_i(B) \geq \frac{1}{2} - \frac{1}{2} \distribution_i(A) \geq \frac{1}{2} - \frac{1}{4} \geq \frac{1}{4}$. \qee
\end{example}
Thus, already for rather simple examples the reasoning is non-trivial.
To further complicate things, the structure of strategies can also be surprisingly complex:
\begin{example}
	Again consider our running example from \cref{fig:running} with initial distribution $\distribution_0 = \{A \mapsto \frac{3}{4}, B \mapsto \frac{1}{4}\}$ and safe set $H = \{\distribution \mid \distribution(B) = \frac{1}{4}\}$.
	This safety condition is indeed satisfiable, however the (unique) optimal strategy requires both infinite memory as well as randomization with arbitrarily small fractions!
	In step $1$, we require choosing $a$ with $\frac{2}{3}$ and $b$ with $\frac{1}{3}$ to satisfy the safety constraint in the second step, getting $\distribution_1 = \{A \mapsto \frac{1}{2}, B \mapsto \frac{1}{4}, C \mapsto \frac{1}{4}\}$.
	For step $2$, we require choosing both $a$ and $b$ with probability $\frac{1}{2}$ each, yielding $\distribution_2 = \{A \mapsto \frac{3}{8}, B \mapsto \frac{1}{4}, C \mapsto \frac{3}{8}\}$.
	Continuing this strategy, we obtain at step $i$ that $\distribution_{i} = \{A \mapsto \frac{1}{4} + \frac{1}{2^{i+1}}, B \mapsto \frac{1}{4}, C \mapsto \frac{1}{2} - \frac{1}{2^{i+1}}\}$ and action $a$ is chosen with probability $1 / (2^{i-1} + 1)$, converging to $1$. \qee
\end{example}
In the following, we provide two algorithms that handle both examples.
Our first algorithm focusses on memoryless strategies, the second considers a certain type of infinite memory strategies.
Essentially, the underlying idea is to automatically synthesize a strategy together with such inductive proofs of safety.

\section{Proving Safety by Invariants}\label{sec:invariantssec}
We now discuss our principled idea of proving safety by means of (inductive) invariants, taking inspiration from research on safety analysis in programs~\cite{ColonSS03,DBLP:conf/pldi/Chatterjee0GG20}.
We first show that considering strategies which are purely based on the current distribution over states are sufficient.
Then, we show that inductive invariants are a \emph{sound and complete} certificate for safety.
Together, we obtain that an initial distribution is $H$-safe \emph{if and only if} there exists an invariant set $I$ and distribution strategy $\strategy$ such that (i)~the initial distribution is contained in $I$, (ii)~$I$ is a subset of the safe set $H$, and (iii)~$I$ is inductive under $\strategy$, i.e.\ if $\distribution \in I$ then $\MDP^\strategy(\distribution) \in I$.
In the following section, we then show how we search for invariants and distribution strategies \emph{of a particular shape}.

\subsection{Distribution Strategies}
We show that \emph{distribution strategies} $\strategy : \Distributions(\States) \to \StrategiesOS$, yielding for each distribution over states a one-step strategy to take next, are sufficient for the problem at hand.
More formally, we want to show that an $H$-safe distribution strategy exists if and only if there exists any $H$-safe strategy.

First, observe that distribution strategies are a special case of regular path strategies.
In particular, for any given initial distribution, we obtain a uniquely determined stream of distributions as $\distribution_{i+1} = \MDP^{\strategy(\distribution_i)}(\distribution_i)$, i.e.\ the distribution $\distribution_{i+1}$ is obtained by applying the one-step strategy $\strategy(\distribution_i)$ to $\distribution_i$.
In turn, this lets us define the Markov strategy $\hat{\strategy}_i(s) = \strategy(\distribution_i)(s)$.
For simplicity, we identify distribution strategies with their induced path strategy.

Next, we argue that restricting to distribution strategies is sufficient.
\begin{theorem} \label{stm:dist_strat_is_enough}
	An initial distribution $\distribution_0$ is $H$-safe if and only if there exists a distribution strategy $\strategy$ such that $\distribution_0$ is $H$-safe under $\strategy$.
\end{theorem}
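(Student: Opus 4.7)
The \enquote{if} direction is immediate, as a distribution strategy induces a (path) strategy by the discussion preceding the statement. For \enquote{only if}, let $\strategy^*$ be a path strategy witnessing $H$-safety of $\distribution_0$, producing the stream $(\distribution_i)_{i \ge 0}$ with each $\distribution_i \in H$. I will build a distribution strategy $\strategy$ under which $\distribution_0$ is still $H$-safe in two moves.

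First, I reduce to a Markov (step-indexed) strategy $\hat{\strategy}$ that reproduces the same stream. Define $\hat{\strategy}_i(s)(a)$ to be the $\strategy^*$-conditional probability of playing $a$ at step $i$ given state $s$ at step $i$ (and arbitrary when $\distribution_i(s)=0$). By induction on $i$, $\MDP^{\hat{\strategy}}(\distribution_0, i) = \distribution_i$: the inductive step reduces to checking $\distribution_{i+1}(s') = \sum_{s,a} \distribution_i(s)\cdot\hat{\strategy}_i(s)(a)\cdot\transitions(s,a,s')$, and by construction of $\hat{\strategy}_i$ the factor $\distribution_i(s)\cdot\hat{\strategy}_i(s)(a)$ equals the joint $\strategy^*$-probability of occupying $(s,a)$ at step $i$, so summation yields exactly $\distribution_{i+1}(s')$.

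Second, I lift $\hat{\strategy}$ to a distribution strategy. For each distribution $\distribution$ appearing in the stream set $i(\distribution) := \min\{i \ge 0 \mid \distribution_i = \distribution\}$ and $\strategy(\distribution) := \hat{\strategy}_{i(\distribution)}$; on all other distributions $\strategy$ is arbitrary. Denote by $(\distribution'_k)_{k \ge 0}$ the stream induced by $\strategy$ from $\distribution_0$. I show by induction on $k$ that every $\distribution'_k$ lies in $\{\distribution_i\}_{i \ge 0}$: the base case is $\distribution'_0 = \distribution_0$, and if $\distribution'_k = \distribution_j$ for some $j$, then $\distribution'_{k+1} = \MDP^{\hat{\strategy}_{i(\distribution_j)}}(\distribution_{i(\distribution_j)}) = \distribution_{i(\distribution_j)+1}$, again an element of the original stream. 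Hence every $\distribution'_k \in H$, so $\strategy$ witnesses $H$-safety of $\distribution_0$.

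The main obstacle is this last induction: when the induced stream re-enters a previously seen distribution, $\strategy$ is forced to react identically at both occurrences, and one must check that this \enquote{short-cutting} does not drive the stream outside $H$. Using the \emph{smallest} matching index in the definition of $\strategy$ is precisely what ensures that the short-cut stream remains a subsequence of the original safe stream rather than escaping from it.
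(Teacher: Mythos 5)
Your proof is correct and follows essentially the same route as the paper's: convert the witnessing path strategy into a step-indexed (Markov) strategy that reproduces the same stream of distributions, then collapse it into a distribution strategy keyed on the first occurrence of each distribution. Your closing induction, showing that the induced stream never leaves the set $\{\distribution_i\}_{i\ge 0}$, is in fact a cleaner formalization of the paper's informal ``we effectively closed a loop inside $H$'' argument for the repeated-distribution case.
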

\iftoggle{arxiv}{
\begin{proof}
	The backward direction follows immediately.

	For the forward direction, suppose that $\distribution_0$ is $H$-safe and let $\strategy$ be a witness thereof.
	We consider the stream of induced distributions $\distribution_i = \MDP^\strategy(\distribution_0, i)$ together with the distribution of played actions in each step, i.e.\ $p_{s, a} = \ProbabilityMC<\MDP^\strategy, \distribution_0>[\mathsf{act}_i(\infinitepath) = a]$, where $\mathsf{act}_i : \IPaths<\MDP> \to \Actions$ yields the $i$-th action of an infinite path.
	Note that $\distribution_i \in H$ for all $i$ by assumption.
	We define the distribution strategy $\hat{\strategy}$ as follows:
	For each occurring distribution $\distribution_i$, we set $\hat{\strategy}(\distribution_i) = \strategy_i$, where $\strategy_i(s)$ is arbitrary if $\distribution_i(s) = 0$ and $\strategy_i(s)(a) = p_{s, a} / \sum_{a \in \Actions(s)} p_{s, a}$ otherwise.
	Note that if $\distribution_i(s) > 0$, we necessarily have that $\sum_{a \in \Actions(s)} p_{s, a} > 0$, since at step $i$ we are in state $s$ with non-negative probability and thus some action in $s$ is played by the original strategy $\strategy$.
	Clearly, applying $\MDP^{\strategy_i}(\distribution_i) = \distribution_{i+1}$.

	To ensure that $\hat{\strategy}$ is well-defined, we need to consider a special case, namely when a distribution appears several times, i.e.\ if $\distribution_i = \distribution_j$ for $i \neq j$.
	Suppose that $i$ is the first time a distribution appears again under $\strategy$, i.e.\ $\distribution_i = \distribution_j$ for $j < i$ and $\distribution_{j} \neq \distribution_{j'}$ for all $j < j' < i$.
	Then, we only define $\strategy(\distribution)$ for all distributions before step $i$.
	As the distribution $\distribution_j$ re-appears in step $i$, we effectively closed a loop inside $H$ and we can simply keep re-applying the decision between step $j$ and $i$ to remain safe. \qed
\end{proof}
}{
\begin{proof}[Sketch]
	The full proof can be found in \cite[Sec.\ 4.1]{arxiv}.
	Intuitively, only the \enquote{distribution} behaviour of a strategy is relevant and we can sufficiently replicate the behaviour of any safe strategy by a distribution strategy. \qed
\end{proof}
}
In this way, each MDP corresponds to a (uncountably infinite) transition system $\TS<\MDP> = (\Distributions(\States), T)$ where $(\distribution, \distribution') \in T$ if there exists a one-step strategy $\strategy$ such that $\distribution' = \MDP^\strategy(\distribution)$.
Note that $\TS<\MDP>$ is a purely non-deterministic system, without any probabilistic behaviour.
So, our decision problem is equivalent to asking whether the induced transition system $\TS<\MDP>$ can be controlled in a safe way.
Note that $\TS<\MDP>$ is uncountably large and uncountably branching.

\subsection{Distributional Invariants for MDP Safety} \label{sec:invariants}
%
%Recall that we focus on safety constraints specified by a finite set of affine linear inequalities.
%Since this problem in general is quite difficult, in particular it is Skolem hard already on Markov chains, we approach it by invariants.
We now define distributional invariants in MDPs and show that they provide sound and complete certificates for proving initialized (and uninitialized) safety.
%For the remainder of the section, fix an MDP $\MDP$ with $\States = \{\state_1,\dots,\state_n\}$.
%and $H = \{\distribution \mid A \distribution \leq b\}$ a polyhedral safe set.

\paragraph*{Distributional Invariants in MDPs.}
Intuitively, a distributional invariant is a set of probability distributions over MDP states that contains all probability distributions that can arise from applying a strategy to an initial probability distribution, i.e.\ the complete stream $\distribution_i$.
Hence, similar to the safe set $H$, distributional invariants are also defined to be subsets of $\Distributions(\States)$.

\begin{definition}[Distributional Invariants]\label{def:inv}
	Let $\distribution_0 \in \Distributions(\States)$ be a probability distribution over $\States$ and $\strategy$ be a strategy in $\MDP$.
	A set $I \subseteq \Distributions(\States)$ is said to be a \emph{distributional invariant for $\distribution_0$ under $\strategy$} if the sequence of probability distributions induced by applying the strategy $\strategy$ to the initial probability distribution $\distribution_0$ is contained in $I$, i.e.\ if $\MDP^{\strategy}(\distribution_0, i) \in I$ for each $i \geq 0$.

	A distributional invariant $I$ is said to be \emph{inductive under $\strategy$}, if we furthermore have that $\MDP^{\strategy}(\distribution) \in I$ holds for any $\distribution \in I$, i.e.\ if $I$ is \enquote{closed} under application of $\MDP^\strategy$ to any probability distribution contained in $I$.
\end{definition}

\paragraph*{Soundness and Completeness for MDP Safety.}
The following theorem shows that, in order to solve the initialized (and uninitialized) safety problem, one can equivalently search for a distributional invariant that is fully contained in $H$.
Furthermore, it shows that one can without loss of generality restrict the search to inductive distributional invariants.

\begin{theorem}[Sound and Complete Certificate] \label{thm:inv}
	Let $\distribution_0 \in \Distributions(\States)$ be a probability distribution over $\States$, $\strategy$ be a strategy in $\MDP$, and $H \subseteq \Distributions(\States)$ be a safe set.
	Then $\distribution_0$ is $H$-safe under $\strategy$ if and only if there exists an inductive distributional invariant $I$ for $\distribution_0$ and $\strategy$ such that $I \subseteq H$.
\end{theorem}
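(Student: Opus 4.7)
The plan is to prove both directions separately, with the forward direction being a direct construction and the backward direction a straightforward induction.

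For the forward direction, assume that $\distribution_0$ is $H$-safe under $\strategy$, i.e.\ $\MDP^\strategy(\distribution_0, i) \in H$ for all $i \geq 0$. I would take the canonical candidate invariant, namely the orbit $I := \{\MDP^\strategy(\distribution_0, i) \mid i \geq 0\} \subseteq \Distributions(\States)$. By definition of $H$-safety under $\strategy$, every element of $I$ lies in $H$, so $I \subseteq H$. Since $\distribution_0 = \MDP^\strategy(\distribution_0, 0) \in I$ and more generally every $\MDP^\strategy(\distribution_0, i)$ lies in $I$, this set is a distributional invariant for $\distribution_0$ under $\strategy$ in the sense of \cref{def:inv}. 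To see that it is inductive, take any $\distribution \in I$; then $\distribution = \MDP^\strategy(\distribution_0, i)$ for some $i$, and by unrolling the semantics of the induced stream one step we get $\MDP^\strategy(\distribution) = \MDP^\strategy(\distribution_0, i+1) \in I$. Here one should be slightly careful: for a \emph{general} path strategy $\strategy$ the one-step application $\MDP^\strategy(\distribution)$ may depend on the history of $\distribution$, so to make this step clean I would invoke \cref{stm:dist_strat_is_enough} and assume without loss of generality that $\strategy$ is a distribution strategy, in which case $\MDP^\strategy(\distribution_i) = \distribution_{i+1}$ holds unambiguously.

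For the backward direction, assume there is a set $I \subseteq H$ that is a distributional invariant for $\distribution_0$ under $\strategy$ and inductive under $\strategy$. I would prove by induction on $i$ that $\MDP^\strategy(\distribution_0, i) \in I$ for every $i \geq 0$. The base case $i = 0$ holds because $I$ being a distributional invariant for $\distribution_0$ under $\strategy$ implies $\distribution_0 \in I$. For the step, if $\distribution_i := \MDP^\strategy(\distribution_0, i) \in I$, then by the inductivity clause of \cref{def:inv} we have $\MDP^\strategy(\distribution_i) = \distribution_{i+1} \in I$. Combined with $I \subseteq H$, this gives $\MDP^\strategy(\distribution_0, i) \in H$ for all $i$, i.e.\ $\distribution_0$ is $H$-safe under $\strategy$.

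The only real subtlety, and thus the main thing to get right, is the forward direction's inductivity check, because the operator $\MDP^\strategy(\cdot)$ is written here as a function of a single distribution even though a general path strategy depends on the full history. The cleanest way to handle this is to first reduce to distribution strategies via \cref{stm:dist_strat_is_enough} so that $\MDP^\strategy$ acts as a well-defined map on $\Distributions(\States)$; after that, the whole argument is essentially bookkeeping and a one-line induction.
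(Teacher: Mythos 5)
Your proof takes essentially the same route as the paper's: the backward direction is immediate from $I \subseteq H$ together with the invariant property, and the forward direction uses the orbit $\Union_{i=0}^\infty \{\MDP^\strategy(\distribution_0, i)\}$ as the witness invariant. The one place you go beyond the paper is the inductivity check for the orbit: the paper simply asserts that this set is \enquote{clearly} inductive, whereas you correctly observe that for a history-dependent path strategy $\MDP^\strategy(\distribution_i)$ need not coincide with $\distribution_{i+1}$, and you resolve this by first passing to a distribution strategy via \cref{stm:dist_strat_is_enough} --- a legitimate refinement that matches how the paper itself combines the two results, with the minor caveat that the invariant you obtain is then inductive under the replacement distribution strategy rather than under the originally given $\strategy$.
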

\iftoggle{arxiv}{
\begin{proof}
	Suppose first that there exists an inductive distributional invariant $I$ for $\distribution_0$ and $\strategy$ such that $I\subseteq H$.
	Then, by the definition of distributional invariants, we know that $\MDP^{\strategy}(\distribution_0, i) \in I$ for each $i \geq 0$.
	Thus, as $I\subseteq H$, this implies that $\MDP^{\strategy}(\distribution_0, i) \in H$ for each $ i \geq 0$.
	Hence, $\strategy$ is $H$-safe from $\distribution_0$.
	
	For the opposite direction, suppose that $\strategy$ is $H$-safe from $\distribution_0$.
	We define a set $I \subseteq \Distributions(\States)$ to contain exactly those probability distributions that are induced by applying $\strategy$ to $\distribution_0$, i.e.\ $I = \Union_{i=0}^\infty \{\MDP^{\strategy}(\distribution_0, i)\}$.
	This clearly is an inductive distributional invariant as defined by \cref{def:inv}.
	On the other hand, since $\strategy$ is $H$-safe from $\distribution_0$, we have that this whole sequence is contained in $H$, hence $I\subseteq H$.
	This proves the claim. \qed
\end{proof}
}{The proof can be found in \cite[Sec.\ 4.2]{arxiv}.}

Thus, in order to solve the initialized safety problem for $\distribution_0$, it suffices to search for (i)~a strategy $\strategy$ and (ii)~an inductive distributional invariant $I$ for $\distribution_0$ and $\strategy$ such that $I\subseteq H$.
On the other hand, in order to solve the uninitialized safety problem, it suffices to search for (i)~an initial probability distribution $\distribution_0$, (ii)~strategy $\strategy$, and (iii)~an inductive distributional invariant $I$ for $\distribution_0$ and $\strategy$ such that $I\subseteq H$.
In the following, we provide a fully automated, sound and \emph{relatively} complete method of deciding the existence of such an invariant and strategy.

\section{Algorithms for Distributional Invariant Synthesis}\label{sec:algo}

We now present two algorithms for automated synthesis of strategies and inductive distributional invariants towards solving distribution safety problems in MDPs.
The two algorithms differ in the kind of strategies they consider and, as a consequence of differences in the involved expressions, also in their completeness guarantees.
For readability, we describe the algorithms in their basic form applied to the initialized variant of the safety problem and discuss further extensions in \cref{sec:extensions}.
In particular, our approach is also directly applicable to the uninitialized variant, as we describe there.

We say that an inductive distributional invariant is \emph{affine} if it can be specified in terms of (non-strict) affine inequalities, which we formalize below.
Both algorithms jointly synthesize a strategy and an affine inductive distributional invariant by employing a \emph{template-based synthesis} approach.
In particular, they fix symbolic templates for each object that needs to be synthesized, encode the defining properties of each object as constraints over unknown template variables, and solve the system of constraints by reduction to the existential first-order theory of the reals.

\newcommand{\tpl}[1]{\textcolor{gray}{#1}}
For example, a template for an affine linear constraint on distributions $\Distributions(\States)$ is given by $\mathsf{aff}(\distribution) = (\tpl{c_0} + \tpl{c_1} \cdot \distribution(s_1) + \dots + \tpl{c_n} \cdot \distribution(s_n) \geq 0)$.
Here, the variables $c_0$ to $c_n$, written in grey for emphasis, are the \emph{template variables}.
For fixed values of these variables the expression $\mathsf{aff}$ is a concrete affine linear predicate over distributions.
Thus, we can ask questions like \enquote{Do there exist values for $c_i$ such that for all distributions $\distribution$ we have that $\mathsf{aff}(\distribution)$ implies $\mathsf{aff}(\MDP^\strategy(\distribution))$?}.
This is a sentence in the theory of reals -- however with quantifier alternation.
As a next step, template-based synthesis approaches then employ various quantifier elimination techniques to convert such expressions into equisatisfiable sentences in, e.g., the existential theory of reals, which is decidable in PSPACE \cite{DBLP:conf/stoc/Canny88}.

%Our algorithms follow this idea, and indeed both run in PSPACE, which provides significant improvement in computational complexity compared to previous approaches.\todo{Does it?}
%Moreover, practically they rely on calls to SMT solvers and thus directly profit from advances thereof.

\paragraph*{Difference between the Algorithms.}
Our two algorithms differ in their applicability and the kind of completeness guarantees that they provide.
In terms of applicability, the first algorithm only considers \emph{memoryless} strategies, while the second algorithm searches for \emph{distribution} strategies specified as fractions of affine linear expressions.
(We discuss an extension to rational functions in \cref{sec:extensions}.)
In terms of completeness guarantees, the first algorithm is \emph{(relatively) complete} in the sense that it is guaranteed to compute a memoryless strategy and an affine inductive distributional invariant that prove safety \emph{whenever they exist}.
In contrast, the second algorithm does not provide the same level of completeness.

\paragraph*{Notation.}
In what follows, we write ${\equiv}$ to denote (syntactic) equivalence of expressions, to distinguish from relational symbols used inside these expressions, such as \enquote{$=$}.
For example $\Phi(x) \equiv x = 0$ means that $\Phi(x)$ is the predicate $x = 0$.
Moreover, $(x_1,\dots,x_n)$ denotes a symbolic probability distribution over the state space $\States=(\state_1,\dots,\state_n)$, where $x_i$ is a symbolic variable that encodes the probability of the system being in $\state_i$.
We use boldface notation $\vec{x} = (x_1,\dots,x_n)$ to denote the vector of symbolic variables.
Thus, the above example would be written $\mathsf{aff}(\vec{x}) \equiv c_0 + c_1 \cdot x_1 + \dots + c_n \cdot x_n \geq 0$.
%Given a boolean predicate $\phi$ over the variable set $\{x_1, \dots, x_n\}$, we write $\vec{x} \models \phi$ to denote that the predicate $\phi$ evaluates to true upon substitution of $\vec{x}$.
Since we often require vectors to represent a distribution, we write $\vec{x} \in \Distributions(\States)$ as abbreviation for the predicate ${\bigwedge}_{i=1}^n (0 \leq x_i \leq 1) \land ({\sum}_{i=1}^n x_i = 1)$.

\paragraph*{Algorithm Input and Assumptions.}
Both algorithms take as input an MDP $\MDP = (\States, \Actions, \transitions)$ with $\States = \{\state_1,\dots,\state_n\}$.
They also take as input a safe set $H \subseteq \Distributions(\States)$.
We assume that $H$ is specified by a boolean predicate over $n$ variables as a logical conjunction of $N_H \in \mathbb{N}_0$ \emph{affine} inequalities, and that it has the form
\begin{equation*}
	H(\vec{x}) \equiv (\vec{x} \in \Distributions(\States)) \land {\bigwedge}_{i=1}^{N_H} (h^i(\vec{x}) \geq 0),
\end{equation*}
where the first term imposes that $\vec{x}$ is a probability distribution over $\States$ and $h^i(\vec{x}) = h_0^i + h_1^i \cdot x_1 + \dots + h_n^i \cdot x_n$ is an affine expression over $\vec{x}$ with real-valued coefficients $h^i_j$ for each $i \in [N_H]$ and $j \in \{0, \dots, n\}$.
(Note that $h_j^i$ are not template variables but fixed values, given as input.)
Next, the algorithms take as input an initial probability distribution $\distribution_0 \in \Distributions(\States)$.
Finally, the algorithms also take as input technical parameters.
Intuitively, these describe the size of used \emph{symbolic templates}, explained later.
For the remainder of the section, fix an initialized safety problem, i.e.\ an $\MDP$, safe set $H$ of the required form, and an initial distribution $\distribution_0$.

\subsection{Synthesis of Affine Invariants and Memoryless Strategies}\label{sec:algoone}

We start by presenting our first algorithm, which synthesizes memoryless strategies and affine inductive distributional invariants.
We refer to this algorithm as \algomem{}.
The algorithm proceeds in the following four steps:
\begin{compactenum}
	\item \emph{Setting up Templates.} The algorithm fixes symbolic templates for the memoryless strategy $\strategy$ and the affine inductive distributional invariant $I$. %and an initial distribution $\distribution_0$ in the case of the uninitialized safety problem.
	%For the initialized safety problem, $\distribution_0$ is a part of the algorithm input.
	Note that the values of the symbolic template variables at this step are \emph{unknown} and are to be computed in subsequent steps.
	\item \emph{Constraint Collection.} The algorithm collects the constraints which encode that $\strategy$ is a (memoryless) strategy, that $I$ contains the initial probability distribution $\distribution_0$, that $I$ is an inductive distributional invariant with respect to $\strategy$ and $\distribution_0$, and that $I$ is contained within $H$.
	This step yields a system of affine constraints over symbolic template variables that contain universal and existential quantifiers.
	\item \emph{Quantifier Elimination.} The algorithm eliminates universal quantifiers from the above constraints to reduce it to a system of purely existentially quantified system of polynomial constraints over the symbolic template variables.
	Concretely, the first algorithm achieves this by application of \emph{Farkas' lemma}.
	\item \emph{Constraint Solving.} The algorithm solves the resulting system of constraints by using an off-the-shelf solver to compute concrete values for symbolic template variables specifying the strategy $\strategy$ and invariant $I$.
\end{compactenum}
We now describe each step in detail.

\paragraph*{Step~1: Setting up Templates.} The algorithm sets templates for $\strategy$ and $I$ as follows:
\begin{compactitem}
	\item Since this algorithm searches for memoryless strategies, the probability of taking an action $\action_j$ in state $\state_i$ is always the same, independent of the current distribution.
	Hence, our template for $\strategy$ consists of a symbolic template variable $p_{\state_i, \action_j}$ for each $\state_i \in \States$, $\action_j \in \Actions(s_i)$.
	We write $p_{\state_i, \circ} = (p_{\state_i, \action_1}, \dots, p_{\state_i, \action_m})$ to refer to the corresponding distribution in state $\state_i$.
	\item The template of $I$ is given by a boolean predicate specified by a conjunction of $N_I$ affine inequalities, where $N_I$ is the \emph{template size} and is an algorithm parameter.
	In particular, the template of $I$ looks as follows:
	\begin{equation*}
		I(\vec{x}) \equiv (\vec{x} \in \Distributions(\States)) \land {\bigwedge}_{i=1}^{N_I} (a^i_0 + a^i_1\cdot x_1 + \dots + a^i_n\cdot x_n \geq 0).
	\end{equation*}
	The first predicate enforces that $I$ only contains vectors that define probability distributions over $S$.
	%We write $a^i(\vec{x})=a^i_0 + a^i_1\cdot x_1 + \dots + a^i_n\cdot x_n$.
\end{compactitem}

\paragraph*{Step~2: Constraint Collection.}
We now collect the constraints over symbolic template variables which encode that $\strategy$ is a memoryless strategy, that $I$ contains the initial distribution $\distribution_0$, that $I$ is an inductive distributional invariant under $\strategy$, and that $I$ is contained in $H$.
\begin{compactitem}
	\item For $\strategy$ to be a strategy, we only need to ensure that each $p_{\state_i, \circ}$ is a probability distribution over the set of available actions at every state $\state_i$.
	Thus, we set
	\begin{equation*}
		\Phi_{\textrm{strat}} \equiv {\bigwedge}_{i=1}^n \left( p_{\state_i, \circ} \in \Distributions(\Actions(\state_i)) \right).
	\end{equation*}
	\item For $I$ to be a distributional invariant for $\strategy$ and $\distribution_0$ as well as to be inductive, it suffices to enforce that $I$ contains $\distribution_0$ and that $I$ is closed under application of $\strategy$.
	Thus, we collect two constraints:
	\begin{equation*}
		\begin{split}
			\Phi_{\textrm{initial}} &\equiv I(\distribution_0) \equiv {\bigwedge}_{i=1}^{N_I} (a^i_0 + a^i_1 \cdot \distribution_0^1 + \dots a^i_n\cdot \distribution_0^n \geq 0), \text{ and} \\
			\Phi_{\textrm{inductive}} &\equiv \left(\forall \vec{x} \in \mathbb{R}^n.\ I(\vec{x}) \Longrightarrow I(\mathrm{step}(\vec{x})) \right),
		\end{split}
	\end{equation*}
	where $\mathrm{step}(\vec{x})(x_i) = \sum_{\state_k \in \States, \action_j \in \Actions(s_k)} p_{\state_k, \action_j} \cdot \transitions(\state_k, \action_j, \state_i) \cdot x_j$ yields the distribution after applying one step of the strategy induced by $\Phi_{\textrm{strat}}$ to $\vec{x}$.
	\item For $I$ to be contained in $H$, we enforce the constraint:
	\begin{equation*}
		\Phi_{\textrm{safe}} \equiv \left(\forall \vec{x}\in\mathbb{R}^n.\ I(\vec{x}) \Longrightarrow H(\vec{x}) \right).
	\end{equation*}
\end{compactitem}

\paragraph*{Step~3: Quantifier Elimination.}
Constraints $\Phi_{\textrm{strat}}$ and $\Phi_{\textrm{initial}}$ are purely existentially quantified over symbolic template variables, thus we can solve them directly.
However, $\Phi_{\textrm{inductive}}$ and $\Phi_{\textrm{safe}}$ contain both universal and existential quantifiers, which are difficult to handle.
In what follows, we show how the algorithm translates these constraints into equisatisfiable \emph{purely existentially quantified} constraints.
In particular, our translation exploits the fact that both $\Phi_{\textrm{inductive}}$ and $\Phi_{\textrm{safe}}$ can, upon splitting the conjunctions on the right-hand side of implications into conjunctions of implications, be expressed as conjunctions of constraints of the form
\begin{equation*}
	\forall\vec{x}\in\mathbb{R}^n.\ (\textrm{affexp}_1(\vec{x}) \geq 0) \land \dots \land (\textrm{affexp}_N(\vec{x}) \geq 0) \Longrightarrow (\textrm{affexp}(\vec{x}) \geq 0).
\end{equation*}
Here, each $\textrm{affexp}_i(\vec{x})$ and $\textrm{affexp}(\vec{x})$ is an affine expression over $\vec{x}$ whose affine coefficients are either concrete real values or symbolic template variables.

In particular, we use Farkas' lemma~\cite{farkas1902theorie} to remove universal quantification and translate the constraint into an equisatisfiable existentially quantified system of constraints over the symbolic template variables, as well as fresh auxiliary variables that are introduced by the translation.
For completeness, we briefly recall (a strengthened and adapted version of) Farkas' lemma.
\begin{lemma}[\cite{farkas1902theorie,DBLP:books/daglib/0016926}] \label{thm:strengthenedfarkas}
	Let $\mathcal{X} = \{x_1,\dots,x_n\}$ be a finite set of real-valued variables, and consider the following system of $N\in\mathbb{N}$ affine inequalities over $\mathcal{X}$:
	\begin{equation*}
		\Phi : \begin{cases}
			c^1_0 + c^1_1\cdot x_1 + \dots + c^1_n \cdot x_n \geq 0 \\
			\qquad \qquad \qquad \vdots \\
			c^N_0 + c^N_1\cdot x_1 + \dots + c^N_n \cdot x_n \geq 0 \\
		\end{cases}.
	\end{equation*}
	Suppose that $\Phi$ is satisfiable.
	Then $\Phi$ entails an affine inequality $\phi \equiv c_0 + c_1 \cdot x_1 + \dots + c_n \cdot x_n$, i.e.\ $\Phi \Longrightarrow \phi$, if and only if $\phi$ can be written as a \emph{non-negative} linear combination of affine inequalities in $\Phi$, i.e.\ if and only if there exist $y_1,\dots,y_n\geq 0$ such that $c_1 = \sum_{j=1}^N y_j \cdot c^j_1$, \dots, $c_n = \sum_{j=1}^N y_j \cdot c^j_n$.
\end{lemma}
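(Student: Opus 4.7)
The plan is to treat the two directions of the biconditional separately, handling sufficiency by direct computation and reducing necessity to the classical homogeneous Farkas' lemma via a standard argument.

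For the sufficiency direction, I will simply verify that if witnesses $y_1, \dots, y_N \geq 0$ exist with $c_i = \sum_j y_j c_i^j$ for $i = 1, \dots, n$, then for any $\vec{x}$ satisfying $\Phi$ the non-negative combination $\sum_j y_j (c_0^j + \sum_i c_i^j x_i) \geq 0$ collapses to $\phi(\vec{x}) \geq 0$ (with the customary strengthened-form allowance of a non-negative slack on the constant term). This direction requires no tools beyond monotonicity.

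The interesting direction is necessity. I would reduce the affine statement to the classical homogeneous Farkas' lemma by the usual homogenization trick: introduce a fresh variable $x_0$, replace each affine inequality $c_0^j + \sum_i c_i^j x_i \geq 0$ by the homogeneous $c_0^j x_0 + \sum_i c_i^j x_i \geq 0$, and adjoin $x_0 \geq 0$. The hypothesis that $\Phi$ is satisfiable ensures the homogenized system admits a solution with $x_0 > 0$, so by scaling the solutions at $x_0 = 1$ correspond bijectively to solutions of $\Phi$, and the entailment transfers. The core step is then the classical Farkas' lemma, which I will prove by the separating-hyperplane theorem: the cone $C = \{A^\top y : y \geq 0\}$ generated by the coefficient rows is closed, so if the target coefficient vector $c$ lies outside $C$ the separating hyperplane theorem supplies a direction $\vec{x}$ with $A\vec{x} \geq 0$ but $c^\top \vec{x} < 0$, contradicting the hypothesized entailment. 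Therefore $c \in C$, which yields the required non-negative multipliers $y_j$.

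The main obstacle is the closedness of the finitely generated cone $C$. This is the genuine topological heart of Farkas' lemma — continuous images of the non-negative orthant need not be closed in general — and I would establish it either by induction on the number of generators or by appealing to the Minkowski--Weyl correspondence between finitely generated cones and polyhedral cones. Because this is a textbook LP fact and the paper already cites standard references, I would invoke the closedness of $C$ as a black box and concentrate the exposition on the homogenization reduction and the separation argument, which together carry the proof.
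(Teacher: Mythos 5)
The paper states this lemma as an imported result (citing Farkas and a standard linear programming reference) and gives no proof of its own, so there is no internal argument to compare against. Your outline is a correct and standard proof of the affine (``strengthened'') Farkas lemma: reduce to the homogeneous cone version by adding a variable $x_0$, apply a separating-hyperplane argument using closedness of the finitely generated cone, and read off the constant-term slack from the multiplier attached to the constraint $x_0 \geq 0$. Identifying closedness of the cone as the genuine topological content is exactly right.

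Two points worth tightening. First, the phrase ``the entailment transfers'' hides a real step: a solution of the homogenized system with $x_0 = 0$ cannot be rescaled to a solution of $\Phi$. The standard fix is a recession argument, and it is precisely where the satisfiability hypothesis enters: pick $\vec{x}^*$ with $\Phi(\vec{x}^*)$, note that if $(0, x_1, \dots, x_n)$ satisfies the homogenized constraints then $\vec{x}^* + t(x_1,\dots,x_n)$ satisfies $\Phi$ for all $t \ge 0$, and letting $t \to \infty$ in $\phi(\vec{x}^* + t\vec{x}) \ge 0$ forces $c_1 x_1 + \dots + c_n x_n \ge 0$, which is the missing case of the homogeneous entailment. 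Second, be aware that the lemma as literally printed only equates the coefficients $c_1, \dots, c_n$ and imposes \emph{no} condition on $c_0$ (and writes $y_1,\dots,y_n$ where $y_1,\dots,y_N$ is meant); taken at face value the ``if'' direction fails for arbitrarily negative $c_0$. You read it, correctly, as the standard strengthened form with the slack $c_0 \ge \sum_{j=1}^N y_j c_0^j$, which is the version that is both true and actually used in the paper's Step~3 via the relation $\equiv_F$.
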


Note that, for any implication appearing in $\Phi_{\textrm{inductive}}$ and $\Phi_{\textrm{safe}}$, the system of constraints on the left-hand side is simply $I(\vec{x})$, and the satisfiability of $I(\vec{x})$ is enforced by $\Phi_{\textrm{initial}}$.
Hence, we may apply Farkas lemma to translate each constraint with universal quantification into an equivalent purely existentially quantified constraint.
In particular, for any constraint of the form
\begin{equation*}
	\forall\vec{x}\in\mathbb{R}^n.\, (\textrm{affexp}_1(\vec{x}) \geq 0) \land \dots \land (\textrm{affexp}_N(\vec{x}) \geq 0) \Longrightarrow (\textrm{affexp}(\vec{x}) \geq 0),
\end{equation*}
we introduce fresh template variables $y_1,\dots,y_N$ and translate it into the system of purely existentially quantified constraints
\begin{equation*}
	(y_1 \geq 0) \land \dots \land (y_N \geq 0) \land (\textrm{affexp}(\vec{x}) \equiv_{F} y_1 \cdot \textrm{affexp}_1(\vec{x}) + \dots + y_N \cdot \textrm{affexp}_N(\vec{x})).
\end{equation*}
Here, we use $\textrm{affexp}(\vec{x}) \equiv_{F} y_1 \cdot \textrm{affexp}_1(\vec{x}) + \dots + y_N \cdot \textrm{affexp}_N(\vec{x})$ to denote the set of $n+1$ equalities over the symbolic template variable and $y_1,\dots,y_N$ which equate the constant coefficients as well as the linear coefficients of each $x_i$ on two sides of the equivalence, i.e.\ exactly those equalities which we obtain from applying Farkas' lemma.
We highlight that the expressions $\textrm{affexp}$ are only affine linear for \emph{fixed} existentially quantified variables, i.e.\ they are in general quadratic.
%The expression we are considering is of the form $\exists \vec{c}.~\forall \vec{x}.~(\textrm{affexp}(\vec{c}, \vec{x}) \dots)$.

\paragraph*{Step~4: Constraint Solving.}
Finally, we feed the resulting system of existentially quantified polynomial constraints over the symbolic template variables as well as the auxiliary variables introduced by applying Farkas' lemma to an off-the-shelf constraint solver.
If the solver outputs a solution, we conclude that the computed invariant $I$ is an inductive distributional invariant for the strategy $\strategy$ and initial distribution $\distribution_0$, and that $I$ is contained in $H$.
Therefore, by \cref{thm:inv}, we conclude that $\distribution_0$ is $H$-safe under $\strategy$.
\begin{theorem}\label{thm:firstalgo}
	\emph{Soundness}: Suppose \algomem{} returns a memoryless strategy $\strategy$ and an affine inductive distributional invariant $I$.
	Then, $\distribution_0$ is $H$-safe under $\strategy$.

	\emph{Completeness}: If there exist a memoryless strategy $\strategy$ and an affine inductive distributional invariant $I$ such that $I\subseteq H$ and $\distribution_0$ is $H$-safe under $\strategy$, then there exists a minimal value of the template size $N_I \in \mathbb{N}$ such that $\strategy$ and $I$ are produced by \algomem{}.

	\emph{Complexity}: The runtime of \algomem{} is in PSPACE in the size of the MDP, the encoding of the safe set $H$ and the template size parameter $N_I \in \mathbb{N}$. %, and lies in NP for Markov chains.
\end{theorem}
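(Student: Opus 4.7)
The plan is to prove the three parts separately. The unifying idea is that the algorithm encodes the semantic conditions of \cref{thm:inv} as a purely existentially quantified polynomial constraint system over the template variables, using \cref{thm:strengthenedfarkas} to eliminate the universal quantification in $\Phi_{\textrm{inductive}}$ and $\Phi_{\textrm{safe}}$. Soundness and completeness then reduce to checking that each direction of Farkas' lemma is correctly applied, and complexity reduces to bounding the size of the resulting sentence in the existential theory of the reals.

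For soundness, I would fix any solution returned by the solver and verify that it satisfies $\Phi_{\textrm{strat}}$, $\Phi_{\textrm{initial}}$, $\Phi_{\textrm{inductive}}$, and $\Phi_{\textrm{safe}}$ in their original (quantified) forms. The first two are existential, so their satisfaction is immediate and yields a memoryless strategy $\strategy$ together with $\distribution_0 \in I$. For the Farkas-translated versions of the latter two, I would invoke the elementary ``if'' direction of \cref{thm:strengthenedfarkas}: given non-negative multipliers whose coefficient-matching equalities $\textrm{affexp}(\vec{x}) \equiv_F \sum_j y_j \cdot \textrm{affexp}_j(\vec{x})$ hold, the implication follows by direct substitution and non-negativity. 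Hence $I$ is an inductive distributional invariant for $\strategy$ and $\distribution_0$ with $I \subseteq H$, and the safety of $\distribution_0$ under $\strategy$ follows from \cref{thm:inv}.

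For completeness, I would start from any witness pair $(\strategy^*, I^*)$ where $I^*$ is specified by $N^* \in \Naturals$ affine inequalities, set the template size $N_I = N^*$, and assign the template variables $p_{\state_i, \action_j}$ and $a^i_j$ to the concrete coefficients of $\strategy^*$ and $I^*$. This assignment directly satisfies $\Phi_{\textrm{strat}}$, $\Phi_{\textrm{initial}}$, and the original (pre-translation) forms of $\Phi_{\textrm{inductive}}$ and $\Phi_{\textrm{safe}}$. The main obstacle is to additionally exhibit values for the Farkas multipliers $y_1, \dots, y_N$ that satisfy the translated constraints. For this I would invoke the nontrivial ``only if'' direction of \cref{thm:strengthenedfarkas}, which requires the premise system of each implication to be \emph{satisfiable}. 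Satisfiability is guaranteed by $\Phi_{\textrm{initial}}$, since $\distribution_0$ witnesses that $I^*(\vec{x})$ is satisfiable. Thus the full translated system is satisfiable; the solver must succeed, and by soundness whatever pair it returns is a valid certificate.

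For complexity, I would count sizes: the number of template variables is polynomial in $|\States|$, $|\Actions|$, $N_H$, and $N_I$; each implication in $\Phi_{\textrm{inductive}}$ and $\Phi_{\textrm{safe}}$ introduces polynomially many fresh multipliers; and each coefficient-matching equality is at most bilinear in the template variables and multipliers, so the resulting constraints have degree at most two. The total system is therefore an existentially quantified sentence of polynomial size in the first-order theory of the reals, which is decidable in PSPACE by Canny's theorem \cite{DBLP:conf/stoc/Canny88}. The hardest part of the argument is the completeness step, specifically justifying applicability of the converse of Farkas' lemma via $\Phi_{\textrm{initial}}$; soundness and complexity are routine once the algorithm's encoding is laid out as above.
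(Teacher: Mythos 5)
Your proof is correct and follows essentially the same route as the paper's: soundness by invoking the elementary direction of Farkas' lemma to recover the quantified constraints of Step~2 and then applying \cref{thm:inv}; completeness by assigning the template variables to the given witness $(\strategy^*, I^*)$ with $N_I$ equal to the number of inequalities defining $I^*$ and applying the converse direction of Farkas' lemma, using $\Phi_{\textrm{initial}}$ to discharge the satisfiability hypothesis; and complexity by observing that the encoded sentence is of polynomial size (and bilinear after translation) in the existential theory of the reals. The only point worth being slightly careful about, which you do note, is that the solver is guaranteed to succeed but may return a different valid pair than $(\strategy^*, I^*)$ -- the paper's phrasing that $\strategy$ and $I$ ``are produced'' is really shorthand for the algorithm succeeding at that template size, and your soundness-based wrap-up handles this cleanly.
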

\iftoggle{arxiv}{
\begin{proof}
	%Both soundness and completeness, i.e.\ the first two parts of the theorem claim, follow from the fact that Step~1 fixes a general template for memoryless strategies, affine inductive distributional invariants, and initial probability distributions, that Step~2 encodes all defining constraints of these objects as constraints, that Step~3 converts the system to an equivalent purely existentially quantified system of constraints by Farkas lemma, and that $I\subseteq H$ if and only if $\distribution_0$ is $H$-safe under $\strategy$ by \cref{thm:inv}.
	
	To prove soundness, i.e.\ the first part of the theorem claim, suppose that  \algomem{} returns a memoryless strategy $\strategy$ and an affine inductive distributional invariant $I$. This means that, in Step~4, the algorithm computed a solution $(\strategy, I, y_1, \dots, y_N)$ to the system of constraints constructed in Step~3, with $y_1,\dots, y_N \geq 0$. Here we slightly abuse the notation and use $\strategy$ and $I$ to also denote the values of template variables that specify $\strategy$ and $I$. By Lemma~\ref{thm:strengthenedfarkas}, this then means that $(\strategy, I)$ is a solution to the system of constraints constructed in Step~2. But constraints in Step~2 encode that $\strategy$ is a memoryless strategy, $I$ is a distributional invariant for $\strategy$ and $\distribution_0$ and that $I$ is contained in $H$. Hence, by Theorem~\ref{thm:inv} it follows that $\distribution_0$ is $H$-safe under $\strategy$, which proves the first part of the theorem claim.
	
	To prove completeness, i.e.\ the second part of the theorem claim, suppose that there exist a memoryless strategy $\strategy$ and an affine inductive distributional invariant $I$ such that $I\subseteq H$ and $\distribution_0$ is $H$-safe under $\strategy$. We need to show that there exist $y_1,\dots, y_N \geq 0$ and the minimal template size $N_I$ such that $(\strategy, I, y_1, \dots, y_N)$ is a solution to the system of constraints in Step~3 for the template size $N_I$, which the algorithm can thus compute in Step~$4$. To prove this, define $N_I$ as the number of affine inequalities appearing in the specification of $I$. By assumptions in the second part of the theorem claim, it follows that $(\strategy, I)$ satisfy all constraints and therefore present a solution to the system of constraints constructed in Step~2 for the template size $N_I$. Then, by Lemma~\ref{thm:strengthenedfarkas}, we have that there exist $y_1,\dots, y_N \geq 0$ such that $(\strategy, I, y_1, \dots, y_N)$ is a solution to the system of constraints in Step~3 for the template size $N_I$. This proves the second part of the theorem claim.

%	We note a subtlety for $\Phi_{\textrm{safe}}$.
%	Here, $H(\vec{x})$ is found on the right-hand side.
%	If an inequality of $H(\vec{x})$ is strict, i.e.\ $\phi(\vec{x}) > 0$, we replace it by $\phi(\vec{x}) \geq \varepsilon_\phi$ and existentially quantify $\varepsilon_\phi$.
%	In other words, we then search for a \enquote{universal} lower bound.
%	Clearly, if such a $\varepsilon_\phi$ exists, the result is correct.
%	We maintain relative completeness in the sense that we identify a result if one with such a universal bound exists.

	For the runtime complexity, observe that the first three steps of the algorithm all have polynomial runtime and yield a system of constraints which is polynomial in the size of the MDP, the encoding of the safe set $H$ and the template size parameter $N_I\in \mathbb{N}$.
%	Observe that in the Markov chain case, our constraints before quantifier elimination are of the form $\exists \vec{C}, \vec{c}.~\forall \vec{x} \in \Reals^n.~(\vec{C} \cdot \vec{x} + \vec{c} \geq 0) \Longrightarrow (\vec{C} \cdot P \cdot \vec{x} + \vec{c} \geq 0)$ and similar.
%	After the quantifier elimination through Farkas' lemma, we get a purely existential, quadratic inequalities, i.e.\ checking the satisfiability of a quadratic program, which is in NP.
%	Note that we cannot conclude $\Sigma_2$ containment here, since we do not know whether the memoryless strategies required to achieve safety are polynomially sized, i.e.\ whether the randomization required can be bounded by a polynomial expression.
	Thus, the resulting query is a sentence in the existential first-order theory of the reals, which can be solved in PSPACE. \qed
\end{proof}
}{The proof can be found in \cite[Sec.\ 5.1]{arxiv}.}
We comment on the PSPACE upper bound on the complexity of \algomem{}.
The upper bound holds since the application of Farkas' lemma reduces synthesis to solving a sentence in the existential first-order theory of the reals and since the size of the sentence is polynomial in the sizes of the MDP, the encoding of the safe set $H$ and the invariant template size $N_i$.
However, it is unclear whether the resulting constraints could be solved more efficiently, and the best known upper bound on the time complexity of algorithms for template-based affine inductive invariant synthesis in programs is also PSPACE~\cite{ColonSS03,DBLP:conf/pldi/AsadiC0GM21}.
Designing more efficient algorithms for solving constraints of this form would lead to better algorithms both for the safety problem studied in this work and for template-based affine inductive invariant synthesis in programs.

\begin{example}
	For completeness, we provide the constraints generated in Step~2 for \cref{ex:first_simple_case} with $N_I = 1$ for readability, i.e.\ our running example \cref{fig:running} with $\distribution_0 = \{A \mapsto \frac{1}{3}, B \mapsto \frac{1}{3}, C \mapsto \frac{1}{3}\}$ and $H = \{\distribution \mid \distribution(C) \geq \frac{1}{4}\}$, in \cref{fig:constraints_for_example}.
\end{example}

\begin{figure}[t]
	\begin{align*}
			\Phi_{\text{init}} & : \tpl{c_0} + \tpl{c_1} \cdot \tfrac{1}{3} + \tpl{c_2} \cdot \tfrac{1}{3} + \tpl{c_3} \cdot \tfrac{1}{3} \geq 0 \\
			\Phi_{\textrm{safe}} & : (\tpl{c_0} + \tpl{c_1} \cdot A + \tpl{c_2} \cdot B + \tpl{c_3} \cdot C \geq 0) \Longrightarrow C \geq \tfrac{1}{4} \\
			\Phi_{\textrm{inductive}} & : \begin{gathered}
					(\tpl{c_0} + \tpl{c_1} \cdot A + \tpl{c_2} \cdot B + \tpl{c_3} \cdot C \geq 0) \Longrightarrow \\ \tpl{c_0} + \tpl{c_1} \cdot (A \cdot \tpl{p_{A, a_1}} + \tfrac{1}{2} C) + \tpl{c_2} \cdot A \cdot \tpl{p_{A, a_2}} + \tpl{c_3} \cdot (B + \tfrac{1}{2} C) \geq 0
				\end{gathered} \\
			\Phi_{\textrm{strat}} & : \tpl{p_{A, a_1}} \geq 0 \quad \tpl{p_{A, a_2}} \geq 0 \quad \tpl{p_{A, a_1}} + \tpl{p_{A, a_2}} = 1
	\end{align*}
	\caption{
		List of constraints generated in Step~2 for \cref{ex:first_simple_case} with $N_I = 1$.
		The uppercase letters correspond to variables indicating the distribution in these states, i.e.\ $A$ refers to $\distribution(A)$.
		These also are the universally quantified variables, which will be handled by the quantifier elimination in Step~3.
		The template variables are written in grey.
		For readability, we omit the constraints required for state distributions $\mu \in \Distributions(\States)$, i.e.\ $A \geq 0$ etc.
		The actual query sent to the solver in Step~4 after quantifier elimination comprises 27 constraints with 21 variables.
	} \label{fig:constraints_for_example}
\end{figure}

To conclude this section, we emphasize that our algorithm \emph{simultaneously} synthesizes both the invariant and the witnessing strategy, which is the key component to achieve relative completeness.

\subsection{Synthesis of Affine Invariants and General Strategies}\label{sec:algotwo}
We now present our second algorithm, which additionally synthesizes \emph{distribution strategies} (of a particular shape) together with an affine inductive distributional invariant.
We refer to it as \algogen{}.
The second algorithm proceeds in the analogous four steps as the first algorithm, \algomem{}.
Hence, in the interest of space, we only discuss the differences compared to \algomem{}. % in \cref{sec:algoone}.

\paragraph*{Step~1: Setting up Templates.}
The algorithm sets up templates for $\strategy$ and $I$.
The template for $I$ is defined analogously as in \cref{sec:algoone}.
However, as we now want to search for a strategy $\strategy$ that need not be memoryless but instead may depend on the current distribution, we need to consider a more general template.
In particular, the template for the probability $p_{\state_i,\action_j}$ of taking an action $\action_j$ in state $\state_i$ is no longer a constant value.
Instead, $p_{\state_i,\action_j}(\vec{x})$ is a function of the probability distribution $\vec{x}$ of the current state of the MDP, and we define its template to be a quotient of two affine expressions for each $s_i \in \States$ and $a_j \in \Actions(\state_i)$:
\begin{equation*}
	p_{\state_i, \action_j}(\vec{x}) \equiv \frac{\mathrm{num}(\state_i, \action_j)(\vec{x})}{\mathrm{den}(\state_i)(\vec{x})} \equiv \frac{r^{i,j}_0 + r^{i,j}_1 \cdot x_1 + \dots + r^{i,j}_n \cdot x_n}{s^{i}_0 + s^{i}_1 \cdot x_1 + \dots + s^{i}_n \cdot x_n}.
\end{equation*}
(In \cref{sec:extensions}, we discuss how to extend our approach to polynomial expressions for numerator and denominator, i.e.\ rational functions.)
Note that the coefficients in the numerator depend both on the state $\state_i$ and the action $\action_j$, whereas the coefficients in the denominator depend only on the state $\state_i$.
This is because we only use the affine expression in the denominator as a normalization factor to ensure that $p_{\state_i,\action_i}$ indeed defines a probability.

\paragraph*{Step~2: Constraint Collection.}
As before, the algorithm now collects the constraints over symbolic template variables which encode that $\strategy$ is a strategy, that $I$ is an inductive distributional invariant, and that $I$ is contained in $H$.
The constraints $\Phi_{\textrm{initial}}$, $\Phi_{\textrm{inductive}}$, and $\Phi_{\textrm{safe}}$ are defined analogously as in \cref{sec:algoone}, with the necessary adaptation to $\mathrm{step}(\vec{x})$.
For the strategy constraint $\Phi_{\textrm{strat}}$ we now need to take additional care to ensure that each quotient template defined above does not induce division by $0$ and that these values indeed correspond to a distribution over the available actions.
We ensure this by the following constraint:
\begin{equation*}
		\Phi_{\textrm{strat}} \equiv \forall \vec{x} \in \Reals^n.\ I(\vec{x}) \Longrightarrow {\bigwedge}_{i=1}^n \left(
		\begin{aligned}
			& {\bigwedge}_{\action_j \in \Actions(\state_i)} \mathrm{num}(\state_i, \action_j)(\vec{x}) \geq 0 \land {} \\
			& \mathrm{den}(\state_i)(\vec{x}) \geq 1 \land {} \\
			& {\sum}_{\action_j \in \Actions(\state_i)} \mathrm{num}(\state_i, \action_j)(\vec{x}) = \mathrm{den}(\state_i)(\vec{x}).
		\end{aligned} \right).
\end{equation*}
The first two constraints ensure that all quantities are positive and we never divide by $0$.
The third means that the numerators sum up to the denominator.
Together, this ensures the desired result, i.e.\ $p_{\state_i, \circ}(\vec{x}) \in \Distributions(\Actions(\state_i))$ whenever $\vec{x} \in \Distributions(\States)$.
Note that the $\geq 1$ constraint for the denominator can be replaced by an arbitrary constant $> 0$, since we can always rescale all involved coefficients.

\paragraph*{Step~3: Quantifier Elimination.}
The constraints $\Phi_{\textrm{strat}}$, $\Phi_{\textrm{initial}}$, and $\Phi_{\textrm{safe}}$ can be handled analogously to \cref{sec:algoone}.
In particular, by applying Farkas' lemma these can be %soundly and completely
translated into an equisatisfiable purely existentially quantified system of polynomial constraints, and our algorithm applies this translation.

However, the constraint $\Phi_{\textrm{inductive}}$ now involves quotients of affine expressions:
Upon splitting the conjunction on the right-hand side of the implication in $\Phi_{\textrm{inductive}}$ into a conjunction of implications, the inequalities on the right-hand side of these implications contain templates for strategy probabilities $p_{\state_i,\action_j}(\vec{x})$.
The algorithm removes the quotients by multiplying both sides of the inequality by denominators of each quotient.
(Recall that each denominator is positive by the constraint $\Phi_{\textrm{strat}}$.)
This results in the multiplication of symbolic affine expressions, hence $\Phi_{\textrm{inductive}}$ becomes a conjunction of implications of the form
\begin{equation*}
	\forall\vec{x}\in\mathbb{R}^n.\, (\textrm{affexp}_1(\vec{x}) \geq 0) \land \dots \land (\textrm{affexp}_N(\vec{x}) \geq 0) \Longrightarrow (\textrm{polyexp}(\vec{x}) \geq 0).
\end{equation*}
Here, each $\textrm{affexp}_i(\vec{x})$ is an affine expression over $\vec{x}$, but $\textrm{polyexp}(\vec{x})$ is now a polynomial expression over $\vec{x}$.
Hence we cannot apply a Farkas' lemma-style result to remove universal quantifiers.

%Instead, we motivate our translation by proving the following strengthening of Handelman's theorem, which characterizes strictly positive polynomials over a set of affine inequalities.
%It will allow us to soundly translate $\Phi_{\textrm{inductive}}$ into an existentially quantified system of constraints over the symbolic template variables, as well as fresh auxiliary variables that are introduced by the translation.
%The theorem below is the standard version of Handelman's theorem, whereas the subsequent corollary is a strengthening to the case when affine inequalities are allowed to be both strict or non-strict.

Instead, we motivate our translation by recalling Handelman's theorem~\cite{handelman1988representing}, which characterizes \emph{strictly} positive polynomials over a set of affine inequalities.
It will allow us to soundly translate $\Phi_{\textrm{inductive}}$ into an existentially quantified system of constraints over the symbolic template variables, as well as fresh auxiliary variables that are introduced by the translation.

\begin{theorem}[\cite{handelman1988representing}] \label{thm:handelman}
	Let $\mathcal{X} = \{x_1,\dots,x_n\}$ be a finite set of real-valued variables, and consider the following system of $N\in\mathbb{N}$ non-strict affine inequalities over $\mathcal{X}$:
	\begin{equation*}\Phi:\, \begin{cases}
		c^1_0 + c^1_1\cdot x_1 + \dots + c^1_n \cdot x_n \geq 0 \\
		\qquad \qquad \qquad \vdots \\
		c^N_0 + c^N_1\cdot x_1 + \dots + c^N_n \cdot x_n \geq 0
	\end{cases}.
	\end{equation*}
	Let $\textrm{Prod}(\Phi) = \{ \prod_{i=1}^t \phi_i  \mid t \in \mathbb{N}_0, \phi_i \in \Phi \}$ be the set of all products of finitely many affine expressions in $\Phi$, where the product of $0$ affine expressions is a constant expression $1$.
	Suppose that $\Phi$ is satisfiable and that $\{\vec{y} \mid \vec{y} \models \Phi\}$, the set of values satisfying $\Phi$, is topologically compact, i.e.\ closed and bounded.
	Then $\Phi$ entails a polynomial inequality $\phi(\vec{x}) > 0$ if and only if $\phi$ can be written as a non-negative linear combination of finitely many products in $\textrm{Prod}(\Phi)$, i.e.\ if and only if there exist $y_1,\dots,y_n\geq 0$ and $\phi_1,\dots,\phi_n\in\textrm{Prod}(\Phi)$ such that $\phi = y_1 \cdot \phi_1 + \dots + y_n \cdot \phi_n$.
\end{theorem}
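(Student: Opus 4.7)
The reverse direction is routine: every element of $\textrm{Prod}(\Phi)$ is non-negative on the feasible set $K = \{\vec{y} \mid \vec{y} \models \Phi\}$ by construction, so any non-negative combination of such products is $\geq 0$ on $K$. Strict positivity as stated in the theorem additionally requires a non-zero contribution from the empty product $1 \in \textrm{Prod}(\Phi)$, which can always be arranged.

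For the forward direction, my plan is a Positivstellensatz-style separation argument. Let $C \subseteq \mathbb{R}[\vec{x}]$ denote the convex cone of polynomials admitting a Handelman decomposition. Assume for contradiction that $\phi > 0$ on $K$ but $\phi \notin C$. Restricting to the finite-dimensional subspace of polynomials of degree at most $\deg(\phi)$, the cone $C$ intersected with this subspace is closed and convex, so by Hahn--Banach separation there exists a linear functional $L$ on polynomials satisfying $L(\phi) \leq 0$ and $L(\psi) \geq 0$ for every $\psi \in C$.

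The crux, and the main obstacle, is to promote $L$ to a Borel measure on $K$. The compactness assumption on $K$ supplies the necessary Archimedean condition: after rescaling, a large constant $N$ satisfies $N - \|\vec{x}\|^2 \geq 0$ on $K$, and this auxiliary inequality is itself expressible in terms of products of members of $\Phi$. A Riesz/Haviland-type representation theorem then yields a Borel measure $\mu$ supported on $K$ with $L(\psi) = \int_K \psi \, d\mu$ for every polynomial $\psi$. Since $\phi > 0$ on $K$ and $\mu$ is non-trivial (because $L(1) \geq 0$ and $L$ cannot vanish on all of $C$), we obtain $L(\phi) = \int_K \phi \, d\mu > 0$, contradicting $L(\phi) \leq 0$. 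Hence $\phi \in C$, giving the required decomposition.

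An alternative, more elementary route would be to apply an affine change of coordinates placing $K$ inside the unit cube $[0,1]^n$, augment $\Phi$ so that the auxiliary inequalities $x_i \geq 0$ and $1 - x_i \geq 0$ are derivable as products from the original constraints, and then invoke Bernstein's positivity theorem: any polynomial strictly positive on $[0,1]^n$ admits a non-negative expansion in the Bernstein basis $\binom{d}{\alpha} x^{\alpha}(1-x)^{d-\alpha}$ for $d$ sufficiently large, and each such basis element already lies in $\textrm{Prod}(\Phi)$. In this approach the obstacle shifts to verifying that the coordinate change preserves Handelman representability and to quantifying the required degree $d$ in terms of $\min_K \phi$.
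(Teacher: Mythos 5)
You should first note that the paper offers no proof of this statement: it is Handelman's theorem, quoted with a citation to Handelman (1988) and used as a black box, so there is nothing in the paper to compare your argument against. (The paper even tacitly concedes that the ``if and only if'' is an overstatement: a non-negative combination of products only yields $\phi \geq 0$ on $K$, not $\phi > 0$, and the text following the theorem explicitly retreats to the one sound direction. Your remark that a positive multiple of the empty product ``can always be arranged'' does not repair the literal converse.)

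Judged on its own terms, your forward direction has two genuine gaps. First, restricting the cone $C$ to the subspace of polynomials of degree at most $\deg(\phi)$ is not legitimate: Handelman representations of $\phi$ in general require products $\phi_i$ of degree strictly larger than $\deg(\phi)$, with the high-degree parts cancelling, so separating $\phi$ from the truncated cone says nothing about membership in $C$; moreover the closedness of even the truncated cone is asserted, not proved. Second, and more seriously, the step you yourself flag as the crux --- promoting the separating functional $L$ to a measure on $K$ --- is exactly where the theorem's depth lies, and the Riesz/Haviland--GNS route you invoke is the proof template for Putinar's and Schm\"udgen's Positivstellens\"atze, which relies on the cone containing all squares so that Cauchy--Schwarz is available; the Handelman cone of products of \emph{linear} forms contains no general squares, so that machinery does not apply. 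The known functional-analytic proof instead goes through the Kadison--Dubois (Krivine--Stone) representation theorem for Archimedean \emph{preprimes}, combined with a Farkas-lemma argument showing that compactness of the polytope makes the cone Archimedean (each $N \pm x_i$, being affine and non-negative on $K$, lies in $C$); alternatively one argues combinatorially via P\'olya's or Bernstein's theorem, which is your second suggested route --- closer in spirit to the elementary proofs in the literature, but you leave both of its stated obstacles (preservation of representability under the coordinate change, and the degree bound) unresolved. As written, the proposal is a plausible roadmap rather than a proof.
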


Notice that we cannot directly apply Handelman's theorem to a constraint
\begin{equation*}
	\forall \vec{x} \in \mathbb{R}^n.\ (\textrm{affexp}_1(\vec{x}) \geq 0) \land \dots \land (\textrm{affexp}_N(\vec{x}) \geq 0) \Longrightarrow (\textrm{polyexp}(\vec{x}) \geq 0),
\end{equation*}
since the polynomial inequality on the right-hand-side of the implication is non-strict whereas the polynomial inequality in Handelman's theorem is strict.
However, the direction needed for the soundness of translation holds even with the non-strict polynomial inequality on the right-hand side.
In particular, it clearly holds that if $\textrm{polyexp}$ can be written as a non-negative linear combination of finitely many products of affine inequalities, then $\textrm{polyexp}$ is non-negative whenever all affine inequalities are non-negative.
Hence, we may use the translation in Handelman's theorem to translate each implication in $\Phi_{\textrm{inductive}}$ into a system of purely existentially quantified constraints.

As Handelman's theorem does not impose a bound on the number of products of affine expressions that might appear in the translation, %in order to apply this translation
we \emph{parametrize} the algorithm with an upper bound $K$ on the maximal number of affine inequalities appearing in each product.
To that end, we define $\textrm{Prod}_K(\Phi) =  \{\prod_{i=1}^t \phi_i \, \mid \, 0\leq t \leq K,\, \phi_i\in\Phi\}$.
Let $M_K = |\textrm{Prod}_K(\Phi)|$ be the total number of such products and $\textrm{Prod}_K(\Phi) = \{\phi_1,\dots,\phi_{M_K}\}$.
%
%For every implication in $\Phi_{\textrm{inductive}}$, the constraint on the left-hand-side is simply $I(\vec{x})$, and the satisfiability of $I(\vec{x})$ is enforced by $\Phi_{\textrm{initial}}$.
%Furthermore, the boundedness of $I(\vec{x})$ follows from the fact that the template of $I$ restricts it to contain only probability distribtions and so $I\subseteq [0,1]^n$.
%Hence, we may apply Corollary~\ref{cor:strengtheninghandelman} to translate each implication in $\Phi_{\textrm{inductive}}$ into a system of purely existentially quantified constraints.
%However, note that Corollary~\ref{cor:strengtheninghandelman} does not impose a bound on the number of products of affine expressions that might appear in the translation.
%Hence, in order to apply Corollary~\ref{cor:strengtheninghandelman}, we \emph{parametrize} it with an upper bound $K$ on the maximal number of affine inequalities appearing in each product and let $K$ be an algorithm parameter.
%To that end, we define $\textrm{Prod}_K(\Phi) =  \{\prod_{i=1}^t \phi_i \, \mid \, 0\leq t \leq K,\, \phi_i\in\Phi\}$.
%Let $M_K = |\textrm{Prod}_K(\Phi)|$ be the total number of such products and $\textrm{Prod}_K(\Phi) = \{\phi_1,\dots,\phi_{M_K}\}$.
%
Then, for any constraint of the form
\begin{equation*}
	\forall\vec{x}\in\mathbb{R}^n.\, (\textrm{affexp}_1(\vec{x}) \geq 0) \land \dots \land (\textrm{affexp}_N(\vec{x}) \geq 0) \Longrightarrow (\textrm{polyexp}(\vec{x}) \geq 0),
\end{equation*}
we introduce fresh template variables $y_1,\dots,y_{M_K}$ and translate it into the system of purely existentially quantified constraints
\begin{equation*}
	(y_1 \geq 0) \land \dots \land (y_N \geq 0) \land (\textrm{polyexp}(\vec{x}) \equiv_H y_1 \cdot \phi_1(\vec{x}) + \dots + y_{M_K} \cdot \phi_{M_K}(\vec{x})).
\end{equation*}
Here, $\textrm{polyexp}(\vec{x}) \equiv_H y_1 \cdot \phi_1(\vec{x}) + \dots + y_{M_K} \cdot \phi_{M_K}(\vec{x})$ denotes the set of equalities over template variables and $y_1,\dots,y_{M_K}$ which equate the constant coefficients as well as the coefficients of each monomial over $\{x_1,\dots,x_k\}$ of degree at most $K$ on two sides of the equivalence, as specified by Handelman's theorem.

While our translation into a purely existentially quantified constraints is not complete due to the non-strict polynomial inequality and due to the parametrization by $K$, Handelman's theorem justifies the translation as it indicates that the translation is \enquote{close to complete} for sufficiently large values of $K$.
%In particular, in our experiments a value of $K = 1$ already was sufficient. \textcolor{blue}{Djordje: Is this true? $K=1$ means that the resulting combination is linear, how can this express polynomials?}

\paragraph*{Step~4: Constraint Solving.}
This step is analogous to \cref{sec:algoone} and we use an off-the-shelf polynomial constraint solver to handle the resulting system of purely existentially quantified polynomial constraints.
If the solver outputs a solution, we conclude that the computed $I$ is an inductive distributional invariant for the computed strategy $\strategy$ and initial distribution $\distribution_0$, and that $I$ is contained in $H$.
Therefore, by \cref{thm:inv}, we conclude that $\distribution_0$ is $H$-safe under $\strategy$.

\begin{theorem} \label{thm:secondalgo}
	\emph{Soundness}: Suppose \algogen{} returns a strategy $\strategy$ and an affine inductive distributional invariant $I$.
	Then, $\strategy$ is $H$-safe for $\distribution_0$.

	\emph{Complexity}: For any fixed parameter $K \in \Naturals$, the runtime of \algogen{} is in PSPACE in the size of the MDP and the template size parameter $N_I \in \Naturals$.
\end{theorem}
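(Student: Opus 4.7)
The plan is to mirror the soundness and complexity arguments from the proof of \cref{thm:firstalgo}, with adjustments for the rational strategy templates and the Handelman-based translation of $\Phi_{\textrm{inductive}}$. For soundness, I assume that \algogen{} returns $(\strategy, I)$, which means the solver found concrete values for the strategy/invariant template variables together with non-negative multipliers $y_i$ that satisfy the Step~3 constraint system. As in the first algorithm, the Farkas-based translations of $\Phi_{\textrm{initial}}$, $\Phi_{\textrm{safe}}$, and $\Phi_{\textrm{strat}}$ guarantee that these Step~2 constraints hold, so $I$ contains $\distribution_0$ and $I \subseteq H$. The new step is to verify that $\strategy$ is a bona fide distribution strategy: the validated $\Phi_{\textrm{strat}}$ ensures that, for every $\vec{x} \in I$, the denominators $\mathrm{den}(\state_i)(\vec{x}) \geq 1$ are strictly positive, the numerators $\mathrm{num}(\state_i,\action_j)(\vec{x}) \geq 0$ are non-negative, and their sum equals the denominator; hence $p_{\state_i,\circ}(\vec{x}) \in \Distributions(\Actions(\state_i))$, so $\strategy$ maps each distribution in $I$ to a well-defined one-step strategy.

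The core verification concerns $\Phi_{\textrm{inductive}}$. After clearing denominators (valid since they are strictly positive on $I$ by the previous paragraph, so the direction of the inequalities is preserved), each implication in $\Phi_{\textrm{inductive}}$ takes the form $\bigwedge_i (\textrm{affexp}_i(\vec{x}) \geq 0) \Rightarrow (\textrm{polyexp}(\vec{x}) \geq 0)$. The Handelman-style translation expresses $\textrm{polyexp}(\vec{x}) = \sum_{k=1}^{M_K} y_k \cdot \phi_k(\vec{x})$ with $y_k \geq 0$ and each $\phi_k \in \textrm{Prod}_K(\Phi)$. The key observation, which gives the sound direction even for non-strict inequalities, is elementary: whenever $\vec{x}$ satisfies $I(\vec{x})$, every factor in every $\phi_k$ is non-negative, so each product $\phi_k(\vec{x}) \geq 0$, and therefore $\textrm{polyexp}(\vec{x}) \geq 0$ as a non-negative combination. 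This shows that $I$ is inductive under $\strategy$ in the sense of \cref{def:inv}. Applying \cref{thm:inv} then yields that $\distribution_0$ is $H$-safe under $\strategy$. The main subtlety here, rather than a real obstacle, is the bookkeeping around the clearing of denominators and ensuring that the sign reasoning is uniformly supplied by $\Phi_{\textrm{strat}}$ on the whole of $I$.

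For the complexity claim, I would bound the size of the existentially quantified polynomial constraint system produced by the first three steps. The strategy template introduces $O(n \cdot |\Actions|)$ variables, the invariant template introduces $O(n \cdot N_I)$ variables, and the Farkas/Handelman translations introduce auxiliary multipliers per implication. For fixed $K$, the set $\textrm{Prod}_K(\Phi)$ has cardinality $M_K$ polynomial in the number of defining inequalities of $I$ and $H$ (and hence polynomial in $N_I$ and the MDP encoding), and the monomials matched on the two sides of the Handelman identity $\equiv_H$ also number polynomially many for fixed $K$. Thus the resulting sentence in the existential first-order theory of the reals has polynomial size, and can be decided in PSPACE by Canny's algorithm~\cite{DBLP:conf/stoc/Canny88}. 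This yields the stated PSPACE upper bound and completes the plan. \qed
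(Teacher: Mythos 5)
Your proposal is correct and follows essentially the same route as the paper's proof: soundness comes from the fact that the Step~3 translation is sound (the non-strict direction of the Handelman/Farkas reasoning — a non-negative combination of products of non-negative affine expressions is non-negative — plus the positivity of the denominators supplied by $\Phi_{\textrm{strat}}$), so the Step~2 constraints hold and \cref{thm:inv} applies; complexity comes from the polynomial size of the constraint system for fixed $K$ and the PSPACE decidability of the existential theory of the reals. The paper states this more tersely, and your write-up merely fills in the same details (the only nitpick being that $\Phi_{\textrm{initial}}$ needs no Farkas translation, as it is already purely existential).
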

\iftoggle{arxiv}{

\begin{proof}
	Soundness follows from the fact that Step~2 encodes all defining constraints of strategies, affine inductive distributional invariants, and initial probability distributions as constraints, that Step~3 soundly converts the constraints into purely existentially quantified system of constraints, and that $I \subseteq H$ if and only if $\distribution_0$ is $H$-safe under $\strategy$ by \cref{thm:inv}.
	
	The runtime complexity claim follows from the fact that the first three steps of the algorithm all have polynomial runtime and thus yield a system of constraints which is polynomial in the size of the MDP and the template size parameter $N_I\in \mathbb{N}$ (note that the value of the parameter $K$ is assumed to be fixed).
	Thus, as the existential first-order theory of the reals is in PSPACE, it follows that solving the resulting system of polynomial constraints can be done in PSPACE. \qed
\end{proof}
}{The proof can be found in \cite[Sec.\ 5.2]{arxiv}.}

\section{Discussion, Extensions, and Variants} \label{sec:extensions}
With our two algorithms in place, we remark on several interesting details and possibilities for extensions.

\paragraph*{Polynomial Expressions.}
Our second algorithm can also be extended to synthesizing \emph{polynomial} inductive distributional invariants, i.e.\ instead of defining the invariant $I$ through a conjunction of affine linear expressions we could synthesize polynomial expressions such as $x_1^2 + x_2 \cdot x_3 \leq 0.5$.
This can be achieved by using Putinar's Positivstellensatz~\cite{putinar1993positive} instead of Handelman's theorem in Step~3.
This technique has recently been used for generating polynomial inductive invariants in programs in~\cite{DBLP:conf/pldi/Chatterjee0GG20}, and our translation in Step~3 can be analogously adapted to synthesize polynomial inductive distributional invariants up to a specified degree.
In the same way, instead of requiring that $H$ is given as a conjunction of affine linear constraints, we can also handle the case of polynomial constraints.
The same holds true for the probabilities of choosing certain actions $p_{\state_i, \action_j}(\vec{x})$.
While we have defined these as fractions of affine linear expressions, we could replace them with rational functions.

We chose to exclude treatment of this case for the sake of readability.

\paragraph*{Uninitialized and Restricted Initial Case.}
We remark that we can directly incorporate the uninitialized case in our algorithm.
In particular, instead of requiring that $I(\distribution_0)$ holds for the concretely given initial values, we can instead existentially quantify over the values of $\distribution_0(\state_i)$ and add the constraint that $\distribution_0$ is a distribution, i.e.\ $\distribution_0(\state_i) \in \Distributions(\States)$.
This does not add universal quantification, thus we do not need to apply any quantifier elimination for these variables.
This also subsumes and generalizes the ideas of \cite{DBLP:conf/lics/AkshayGV18}, which observes that checking whether a fixpoint of the transition dynamics lies within $H$ is sufficient.
Choosing $I = \{\distribution^*\}$ where $\distribution^*$ is such a fixpoint satisfies all of our constraints.
% see \cref{app:unitialized}.
\iftoggle{arxiv}{
We can also adapt our constraints to only consider these fixpoints, as follows.
%We briefly comment on an adaptation of our approach which handles the uninitialized case in the spirit of \cite{DBLP:conf/lics/AkshayGV18}.
First, \cite[Lemma~3.4]{DBLP:conf/lics/AkshayGV18} shows that considering distributions which are fixpoints under one-step strategies is sufficient.
Note that the proof for this lemma does not rely on $H$ being a polytope, but only on it being closed and convex.
Thus, we can consider the following constraints:
\begin{gather*}
	\Phi_{\mathrm{safe}}: \distribution \in \Distributions(\States) \land \distribution \in H \\
	\Phi_{\mathrm{strat}}: {\bigwedge}_{i=1}^n \left( p_{\state_i, \circ} \in \Distributions(\Actions(\state_i)) \right) \\
	\Phi_{\mathrm{fix}}: \distribution = \mathrm{step}(\distribution).
\end{gather*}
%Concretely, $\Phi_{\mathrm{fix}}$ contains equations of the form $\distribution(s) = \sum_{\state_i \in \States, \action_j \in \Actions(\state_i)} \distribution(s_i) \cdot p_{\state_i, \action_j} \cdot \transitions(\state_i, \action_j, s)$.
Observe that all occurring variables are existentially quantified, thus we do not need to perform quantifier elimination.
We mention several consequences:
First, as long as we can write $\distribution \in H$ in the existential theory of the reals and $H$ is closed and convex, our approach is applicable.
Second, we also get out a witness strategy, however we also pay a price in terms of complexity, since this algorithm lies in $\exists \Reals$, compared to the PTIME approach of \cite{DBLP:conf/lics/AkshayGV18}.

However, observe that the constraints of $\Phi_{\mathrm{fix}}$ can be written as follows.
For a fixed transition $(s_i, a_j, s')$, we get that the probability mass moving through this transition is given by $\distribution(s_i) \cdot \transitions(s_i, a_j, s') \cdot p_{s_i, a_j}$.
By assigning this value to an intermediate variables $y_{s_i, a_j, s'}$ and equating $\distribution(s') = \sum_{(s_i, a_j)} y_{s_i, a_j, s'}$, $\Phi_{\mathrm{fix}}$ can be written as a quadratic constraint $\hat{\distribution}^T \cdot Q \cdot \hat{p} = \vec{y}$.
The matrix $Q$ is a diagonal matrix with only positive entries on the diagonal, one for each transition.
The row vector $\hat{\distribution}^T$ comprises $\distribution(s_i)$, with an entry at position $k$ corresponding to the source state of the $k$-th state-action pair in $Q$.
(We can ensure that the duplicated entries all equal $\distribution(s_i)$ by equality constraints.)
The column vector $\hat{p}$ contains the probability of playing the action corresponding to the $k$-th transition, again duplicated and equated where required.
Observe that $Q$ is positive definite and the overall constraints are of polynomial size (at most three equation per transition).
If $H$ is again given through linear inequalities, we can encode all our constraints as quadratic program.
Since $Q$ is positive definite, we can determine satisfiability in polynomial time \cite{DBLP:journals/mp/YeT89}, recovering the complexity result of \cite{DBLP:conf/lics/AkshayGV18}.

%
%\begin{lemma}
%	Let $H$ a closed, convex set (not necessarily a polytope).
%	Then, there exists an $H$-safe distribution $\distribution_0$ iff there exists a strategy $\strategy$ and distribution $\distribution_0$ such that $\distribution_0 \in H$ and $\distribution_0$ is a fixpoint under $\MDP^\strategy$.
%\end{lemma}
%\begin{proof}
%	The backward direction follows immediately, since $\distribution_i = \distribution_0 \in H$.
%	
%%	For the forward direction, fix a witness distribution $\distribution_0$ and strategy $\strategy$.
%%	Then $\distribution_i \in H$ for all $i$.
%%	Let $\distribution^*_j$ all accumulation points of the sequence $\distribution_i$ (as the strategy $\strategy$ is not necessarily memoryless, there might not be a unique limit).
%%	Since $H$ is closed, we have that $\distribution^*_j \in H$.
%%	Now, let $\distribution^* = \sum_J $
%\end{proof}
%
%
% (we only need to decide $\distribution^* \in H$).
}{
See \cite[Sec.\ 6]{arxiv} for details.
}
%However, through straightforward adaptations we can thus even support safe set of a more complicated shape, see \cref{app:unitialized}.

Our algorithm is also able to handle the \enquote{intermediate} case, as follows.
The uninitialized case leaves absolute freedom in the choice of initial distribution, while the initialized case concretely specifies one initial distribution.
Here, we could as well impose \emph{some} constraints on the initial distribution without fixing it completely, i.e.\ ask whether there exists an $H$-safe initial distribution $\distribution_0$ which satisfies a predicate $\Phi_{\text{init}}$.
If $\Phi_{\text{init}}$ is a conjunction of affine linear constraints, we can directly handle this query, too.
Note that both initialized and uninitialized are special cases thereof.

\paragraph*{Non-Inductive Initial Steps.}
Instead of requiring to synthesize an invariant which contains the initial distribution, we can explicitly write down the first $k$ distributions and only then require an invariant and strategy to be found.
More concretely, the set of distributions that can be achieved in a given step $k$ while remaining in $H$ can be explicitly computed, denote this set as $\Delta^k$.
For a different perspective, this describes the set of states reachable in $\TS<\MDP>$ within $k$ steps and corresponds to \enquote{unrolling} the MDP for a fixed number of steps.
This then goes hand in hand with the above \enquote{restricted initial case}, where we ask whether there exists an $H$-safe distribution in $\Delta^k$.
We conjecture that this could simplify the search for distributional invariants for systems which have a lot of \enquote{transient} behaviour, as observed in searching for invariants for state reachability~\cite{BatzCKKMS20}.

%\paragraph{Markov Chains}
%%
%Since Markov chains are special cases of MDP without non-determinism, our approaches are directly applicable to these systems, too.

\section{Implementation and Evaluation}\label{sec:experiments}
While the main focus of our contribution lies on the theory, we validate the applicability through an unoptimized prototype implementation.
We implemented our approach in Python 3.10, using \texttt{SymPy} 1.11 \cite{DBLP:journals/peerj-cs/MeurerSPCKRKIMS17} to handle and simplify symbolic expressions, and \texttt{PySMT} 0.9 \cite{gario2015pysmt} to abstract communication with constraint solvers.
We use \texttt{z3} 4.8 \cite{DBLP:conf/tacas/MouraB08} and \texttt{mathsat} 5.6 \cite{mathsat5} as back-ends.
Our experiments were executed on consumer hardware (AMD Ryzen 3600 CPU with 16 GB RAM).

\begin{table}[t]
	\caption{
		Overview of our results for the five considered models.
		From left to right, we list the name of the model, the runtime, and size of the invariant, followed by the number of variables, constraints, and total size of the query passed to the constraint solvers.
		For $\mathsf{Running}$, we provided additional hints to the solver to achieve a more consistent runtime, indicated by the dagger symbol.
	} \label{tbl:results} %
	\centering %
	\setlength{\tabcolsep}{4pt} %
	\begin{tabular}{cccccc}
		Model                   &   Runtime    & $N_I$ & \#Var. & \#Constr. & Size. \\
		\midrule
		$\mathsf{Running}$            & 3s$^\dagger$ &   3   &   92   &    123    &  849  \\
		$\mathsf{Chain}$             &     10s      &   2   &   69   &    82     &  666  \\
		$\mathsf{Split}$             &      3s      &   3   &   60   &    69     &  571  \\
		$\mathsf{PageRank}$            &      3s      &   2   &   44   &    52     &  536  \\
		$\mathsf{Insulin}\text{-}^{131}\text{I}$ &      2s      &   2   &   44   &    52     &  476
	\end{tabular}
\end{table}

\paragraph*{Caveats.}
While the existential (non-linear) theory of the reals is known to be decidable, practical algorithms are less explored than, for example, SAT solving.
In particular, runtimes are quite sensitive to minor changes in the input structure and initial randomization (many solvers apply randomized algorithms).
We observed differences of several orders of magnitude (going from seconds to hours) simply due to restarting the computation (leading to different initial seeds).
Similarly, by strengthening the antecedents of implications by known facts, we also observed significant improvements.
Concretely, given that we have constraints of the form $I(\vec{x}) \Longrightarrow H(\vec{x})$ and $I(\vec{x}) \Longrightarrow \Phi(\vec{x})$, we observed that changing the second constraint to $I(\vec{x}) \land H(\vec{x}) \Longrightarrow \Phi(\vec{x})$ would drastically improve the runtime even though the two are semantically equivalent.

This suggests that both improvements of our implementation as well as further work on constraint solvers are likely to have a significant impact on the runtime.

\paragraph*{Models.}
Aside from our running example of \cref{fig:running}, which we refer to as $\mathsf{Running}$ here, we consider two further toy examples.

The first model, called $\mathsf{Chain}$, is a Markov chain defined as follows:
We consider the states $S = \{s_1, \dots, s_{10}\}$ and set $\transitions(s_i) = \{s_{i+1} \mapsto 1\}$ for all $i < 10$ and $\transitions(s_{10}) = \{s_9 \mapsto \frac{1}{2}, s_{10} \mapsto \frac{1}{2}\}$.
The initial distribution is given as $\distribution_0(s_i) = \frac{1}{10}$ for all $s_i \in \States$ and the safe set by $H = \{\distribution(s_{10}) \geq \frac{1}{10}\}$.
We are mainly interested in this model to investigate demonstrate applicability to \enquote{larger} systems.

The second model, called $\mathsf{Split}$, is an MDP which actually comprises two independent subsystems.
We depict the model in \cref{fig:model_split}.
The initial distribution is $\distribution_0 = \{A \mapsto \frac{1}{2}, C \mapsto \frac{1}{2}\}$ and the safe set $H = \{\distribution(A) + \distribution(D) \geq \frac{1}{2}\}$.
This aims to explore both disconnected models as well as a safe set which imposes a constraint on multiple states at once.
In particular, observe that initially $\distribution_0(D) = 0$ but $\distribution_i(D)$ converges to $1$ while $\distribution_i(A)$ converges to $0$, even if choosing action $a_1$.
Thus, the invariant needs to identify the simultaneous flow from $A$ to $B$ and $C$ to $D$.

We additionally consider two examples from the literature, namely the $\mathsf{PageRank}$ example from \cite[Fig.~3]{DBLP:journals/jacm/AgrawalAGT15}, based on \cite{DBLP:books/daglib/0018460}, and $\mathsf{Insulin}\text{-}^{131}\text{I}$, a pharmacokinetics system \cite[Example~2]{DBLP:journals/jacm/AgrawalAGT15}, based on \cite{DBLP:conf/qest/ChadhaKVAK11}.
Both are Markov chains.

\begin{figure}[t]
	\centering
	\begin{tikzpicture}[auto]
		\node[state] at (0, 0) (A) {A};
		\node[state] at (2, 0) (B) {B};
		\node[state] at (4, 0) (C) {C};
		\node[state] at (6, 0) (D) {D};
		\node[actionnode] at (0.5,0.5) (a1) {};

		\draw[directedge]
			(A) edge[swap,bend right=10] node[action] {$a_2$} (B)
			(B) edge[loop right] (B)
			(D) edge[loop right] (D)
		;
		\draw[actionedge]
			(A) edge node[action,pos=0.9,swap] {$a_1$} (a1)
		;
		\draw[probedge]
			(a1) edge[out=45,in=100,looseness=1.5,swap] node[prob] {$0.9$} (A)
			(a1) edge[out=45,in=120,looseness=0.8] node[prob] {$0.1$} (B)

			(C) edge[out=43,in=90,looseness=3,swap,pos=0.55] node[prob] {$0.5$} (C)
			(C) edge[out=40,pos=0.50,in=140] node[prob] {$0.5$} (D)
		;
	\end{tikzpicture}
	\caption{
		Our $\mathsf{Split}$ toy example.
		The MDP comprises two disconnected parts.
		Probability mass flows from $A$ to $B$ and from $C$ to $D$ under all strategies.
	} \label{fig:model_split}
\end{figure}
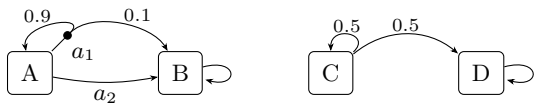

\paragraph*{Results.}
We summarize our findings briefly in \cref{tbl:results}.
We again underline that not too much attention should be put on runtimes, since they are very sensitive to minimal changes in the model.
The evaluation is mainly intended to demonstrate that our methods are actually able to provide results.
%In particular, for ease of reproducibility, we deliberately chose models where runtimes are low, and it is not too unlikely to build instances where our prototype runs for much longer times.
%(At the same time, we also saw quite large instances solved quite quickly.) \textcolor{blue}{Djordje: I would remove the In particular sentence, we already justified that runtime is highly subject to instability of solvers and should not say that we deliberately did something to help our results look better.}
For completeness, we report the size of the invariant $N_I$ and the size of the constraint problem in terms of number of variables, constraints, and operations inside these constraints.
We also provide the invariants and strategy identified by our method in \cref{tbl:invariants_strategies}.
Note that for $\mathsf{Running}$ we used \algogen, while the other two examples are handled by \algomem. For $\mathsf{Running}$, we observed a significant dependence on the initialization of the solvers.
Thus we added several \enquote{hints}, i.e.\ known correct values for some variables.
(To be precise, we set the value for eight of the 92 variables.)

\begin{table}[t]
	\caption{
		The invariants and strategies computed for our models.
		We omit the invariants for the two real-world scenarios since they are too large to fit.
	} \label{tbl:invariants_strategies} %
	\centering %
	\renewcommand{\arraystretch}{1.5} %
	\begin{tabular}{cc}
		Model & Computed Invariant and Strategy \\
		\midrule
		$\mathsf{Running}$ & $\{A \geq \frac{1}{4}, B = \frac{1}{4}\}$ \qquad $\strategy(\distribution) = \{a_1 \mapsto \frac{1}{4 \cdot \distribution(A)}, a_2 \mapsto \frac{4 \cdot \distribution(A) - 1}{4 \cdot \distribution(A)}\}$ \\
		$\mathsf{Chain}$ & $\{s_9 + s_{10} \geq \frac{1}{5}, s_{10} \geq \frac{1}{10}\}$ \qquad $\strategy = \emptyset$ (Markov chain) \\
		$\mathsf{Split}$ & $\{B \leq D, A + B \geq C + D, 3 \cdot (C + D) - (A + B) \geq 1\}$ \qquad $\strategy = \{a \mapsto 1\}$
	\end{tabular}
\end{table}

\paragraph*{Discussion.}
We remark two related points:
Firstly, we observe that very often most of the involved auxiliary variables introduced by the quantifier elimination have a value of zero. %(as also witnessed by the size of expressions for the invariants and strategies).
Thus, a potential optimization is to explicitly set most such variables to zero, check whether the formula is satisfiable, and, if not, gradually remove these constraints either at random or guided by unsat-cores if available (i.e.\ clauses which are the \enquote{reason} for unsatisfiability).
Moreover, we observed significant differences between the solvers:
While \texttt{z3} seems to be much quicker to identify unsatisfiability, \texttt{mathsat} usually is better at finding satisfying assignments.
Hence, using both solvers in tandem seems to be very beneficial.
%Especially in light of this optimization idea, using both solvers in tandem seems to be very beneficial.

\section{Conclusion}\label{sec:conclusion}
We developed a framework for defining certificates for safety objectives in MDPs as distributional inductive invariants. Using this, we came up with two algorithms that synthesize linear/affine invariants and corresponding memoryless or general strategies for safety in MDPs. To the best of our knowledge this is the first time the template-based invariant approach, already known to be successful for programs, has been applied to synthesis strategies in MDPs for distributional safety properties. Further, our experimental results show that our affine invariants are sufficient for many interesting examples. However, the second approach can in fact be lifted to synthesize polynomial invariants, and hence potentially, a large set of MDPs. Exploring this could be a future line of work. Yet another avenue would be to lift this work to more complex objectives. It would also be interesting to explore how one can automate distributional invariant synthesis if the safe set $H$ is specified in terms of both strict and non-strict inequalities, while preserving completeness guarantees. Finally, in terms of applicability, we would like to apply this approach to solve more benchmarks and problems, e.g., to synthesize risk-aware strategies for MDPs~\cite{DBLP:conf/aaai/Meggendorfer22,DBLP:conf/lics/KretinskyM18}.

%In this paper, we developed a framework for defining certificates for safety objectives in MDPs as distributional inductive invariants. Using this, we came up with two algorithms that synthesize linear/affine invariants and corresponding memoryless/general strategies for safety in MDPs. To the best of our knowledge this is the first time the template-based invariant approach, already known to be successful for programs, has been applied to synthesis strategies in MDPs. Further, our experimental results show that our affine invariants are sufficient for many interesting examples. However, the second approach can be lifted to synthesize polynomial invariants, and hence potentially, a large set of MDPs. Exploring this could be a future line of work. Other avenues would be to lift this work to more complex objectives, and to improve the complexity analysis, either providing better bounds or show hardness. In terms of applicability, we would like to apply this approach to solve more benchmarks and problems, e.g., to synthesize risk-aware strategies for MDPs~\cite{DBLP:conf/aaai/Meggendorfer22,DBLP:conf/lics/KretinskyM18}.

\clearpage
\bibliographystyle{splncs04}
\bibliography{main}

\end{document}